\newtheorem{theorem}{Theorem}
\newtheorem{lemma}[theorem]{Lemma}
\newtheorem{corollary}[theorem]{Corollary}
\long\def\symbolfootnote[#1]#2{\begingroup
\def\thefootnote{\fnsymbol{footnote}}\footnote[#1]{#2}\endgroup}
\renewcommand{\paragraph}[1]{{\bf #1}}
\long\def\symbolfootnote[#1]#2{\begingroup
\def\thefootnote{\fnsymbol{footnote}}\footnote[#1]{#2}\endgroup}
\newcommand{\com}[1]{\textbf{\color{blue} (COMMENT: #1)}}
\else\newcommand{\com}[1]{}\fi
\begin{document}
\title{Two New Piggybacking Designs with Lower Repair Bandwidth}

\author{Zhengyi Jiang, Hanxu Hou, Yunghsiang S. Han,
Patrick P. C. Lee, Bo Bai, and Zhongyi Huang
}

\maketitle

\begin{abstract}\symbolfootnote[0]{
Zhengyi Jiang and Zhongyi Huang are with the Department of Mathematics Sciences, Tsinghua University
(E-mail: jzy21@mails.tsinghua.edu.cn, zhongyih@tsinghua.edu.cn).
Hanxu Hou and Bo Bai are with Theory Lab, Central Research Institute, 2012 Labs, Huawei Technology Co. Ltd.
(E-mail: houhanxu@163.com, baibo8@huawei.com).
Yunghsiang S. Han is with the Shenzhen Institute for Advanced Study, University of Electronic Science and Technology of China~(E-mail: yunghsiang@gmail.com).
Patrick P. C. Lee is with the Department of Computer Science and Engineering,
The Chinese University of Hong Kong (E-mail: pclee@cse.cuhk.edu.hk). This
work was partially supported by the National Key R\&D Program of China (No.
2020YFA0712300), National Natural Science Foundation of China (No. 62071121, No.12025104, No.11871298),
Research Grants Council of HKSAR (AoE/P-404/18), Innovation and Technology
Fund (ITS/315/18FX).
}
Piggybacking codes are a special class of MDS array codes that can achieve small repair
bandwidth with small sub-packetization by first creating some instances of an $(n,k)$ MDS code,
such as a Reed-Solomon (RS) code, and then designing the
piggyback function. In this paper, we propose a new piggybacking coding design
which designs the piggyback function over some instances of both $(n,k)$ MDS code
and $(n,k')$ MDS code, when $k\geq k'$. We show that our new piggybacking
design can significantly reduce
the repair bandwidth for single-node failures.
When $k=k'$, we design a piggybacking code
that is MDS code and we show that the designed code has lower repair bandwidth
for single-node failures than all existing piggybacking codes when the number
of parity node $r=n-k\geq8$ and the sub-packetization $\alpha<r$.

Moreover, we propose another piggybacking codes by designing $n$ piggyback functions of
some instances of $(n,k)$ MDS code and adding the $n$ piggyback functions into the $n$
newly created empty entries with no data symbols. We show that our code can significantly
reduce repair bandwidth for single-node failures at a cost of slightly more storage
overhead. In addition, we show that our code
can recover any $r+1$ node failures for some parameters. We also show that
our code  has lower repair bandwidth than
locally repairable codes (LRCs) under the same fault-tolerance and redundancy
for some parameters.
\end{abstract}

\begin{IEEEkeywords}
Piggybacking, MDS array code, repair bandwidth, storage overhead, sub-packetization, fault tolerance
\end{IEEEkeywords}

\IEEEpeerreviewmaketitle

\section{Introduction}
\label{sec:intro}
{\em Maximum distance separable (MDS)} array codes are widely employed in distributed
storage systems that can provide the maximum data reliability for a given amount of
storage overhead.
An $(n,k,\alpha)$ MDS array code encodes a data file of $k\alpha$ {\em data symbols} to
obtain $n\alpha$ {\em coded symbols} with each of the $n$ nodes
storing $\alpha$ symbols such that any $k$ out of $n$ nodes can retrieve all
$k\alpha$ data symbols, where $k < n$ and $\alpha\geq 1$. The number of symbols stored in
each node, i.e., the size of $\alpha$, is called {\em sub-packetization level}.
We usually employ \emph{systematic code} in practical storage systems
such that the $k\alpha$ data symbols are directly stored in the system and can be retrieve without performing any decoding operation. Note that
Reed-Solomon (RS) codes \cite{reed1960} are typical MDS codes with $\alpha=1$.

In modern distributed storage systems, node failures are common and single-node failures
occur more frequently than multi-node failures \cite{ford2010}. When a single-node fails,
it is important to repair the failed node with the {\em repair bandwidth}
(i.e,. the total amount of symbols downloaded from other surviving nodes) as small
as possible. It is shown in \cite{dimakis2010} that we need to download at least
$\frac{\alpha}{n-k}$ symbols from each of the $n-1$ surviving nodes in
repairing one single-node failure.
MDS array codes with minimum repair bandwidth for any single-node failure are called
{\em minimum storage regenerating} (MSR) codes.
There are many constructions of MSR codes to achieve minimum repair bandwidth in the literature
\cite{rashmi2011,tamo2013,hou2016,2017Explicit,li2018,2018A,hou2019a,hou2019b}.
However, the sub-packetization level $\alpha$ of high-code-rate (i.e., $\frac{k}{n}>0.5$) MSR
codes \cite{2018A} is exponential in parameters $n$ and $k$. A nature question is that
can we design new MDS array codes with both sub-packetization and repair bandwidth
as small as possible.

Piggybacking codes \cite{2014A,2017Piggybacking} are a special class of MDS array codes that have
small sub-packetization and small repair bandwidth. The essential idea behind the piggybacking
codes \cite{2017Piggybacking} is as follows: by creating $\alpha$ instances of $(n,k)$ RS codes and
adding carefully well-designed linear combinations of some symbols as so-called piggyback functions from one instance to
the others, we can reduce the repair bandwidth of single-node failure.
Some further studies of piggybacking codes are in \cite{2014Sytematic,2018Repair,2019AnEfficient,2016A,2021piggyback,2021piggybacking}.

The existing piggybacking codes are designed based on some instances of
an $(n,k)$ RS codes.
The motivation of this paper is to significantly reduce the repair bandwidth by designing
new piggybacking codes.
In this paper, we propose new piggybacking codes by first creating some
instances of both $(n,k)$ MDS code and $(n,k')$ MDS code, and then designing the piggyback functions
that can significantly reduce repair bandwidth for single-node failures, when $k\geq k'$.

\subsection{Contributions}
Our main contributions are as follows.
\begin{itemize}
	\item First, we propose a new type of piggybacking coding design which is designed by
both $(n,k)$ MDS code and $(n,k')$ MDS code, where $k\geq k'$.
We give an efficient repair method for any single-node failure for our piggybacking
coding design and present an upper bound on repair bandwidth.
When $k>k'$, our codes are non-MDS codes and we show that our codes have much less repair bandwidth
than that of existing piggybacking codes at a cost of slightly more storage overhead.
The essential reason of repair bandwidth reduction of our codes is that we have more
design space than that of existing piggybacking codes.
	\item Second, when $k=k'$, we design new piggybacking codes that are MDS codes
based on the proposed design. We show that the proposed  piggybacking codes with $k=k'$
have lower repair bandwidth than that of the existing piggybacking codes when
$r=n-k\geq 8$ and the sub-packetization is less than $r$.
	\item Third, we design another piggybacking codes by designing and adding the
$n$ piggyback functions into the $n$ newly created empty entries with no data symbols.
We show that our piggybacking codes
can tolerant any $r+1$ node failures under some conditions. We also show that
our codes have lower repair bandwidth than that of both Azure-LRC \cite{huang2012}
and optimal-LRC \cite{2014optimal} under the same fault-tolerance and
the same storage overhead for some parameters.
\end{itemize}

\subsection{Related Works}
Many works are designed to reduce the repair bandwidth of erasure codes which we discuss as follows.

\subsubsection{Piggybacking Codes}
Rashmi \emph{et al.} present the seminal work of piggybacking codes \cite{2014A,2017Piggybacking} that
can reduce the repair bandwidth for any single-data-node with small sub-packetization.
Another piggybacking codes called REPB are proposed \cite{2018Repair} to achieve
lower repair bandwidth for any single-data-node than that of the codes in \cite{2017Piggybacking}.
Note that the piggybacking codes in \cite{2017Piggybacking,2018Repair} only have
small repair bandwidth for any single-data-node failure, while not for parity nodes.
Some follow-up works \cite{2019AnEfficient,2021piggyback,2021piggybacking} design new piggybacking codes
to obtain small repair bandwidth for both data nodes and parity nodes.
Specifically, when $r=n-k\leq10$ and sub-packetization is $r-1+\sqrt{r-1}$, OOP codes
\cite{2019AnEfficient} have the lowest repair bandwidth for any single-node failure
among the existing piggybacking codes; when $r\geq10$ and sub-packetization is $r$,
the codes in \cite{2021piggybacking} have the lowest repair bandwidth for any
single-node failure among the existing piggybacking codes.

Note that all the existing piggybacking codes are designed over some instances of
an $(n,k)$ MDS code.
In this paper, we design new piggybacking codes that are non-MDS codes over some
instances of both $(n,k)$ MDS code and $(n,k')$ MDS codes with $k>k'$
that have much lower repair bandwidth for any single-node failures at a cost of slightly
larger storage overhead.

\subsubsection{MDS Array Codes}
Minimum storage regenerating (MSR) codes \cite{dimakis2010} are a class of MDS array codes
with minimum repair bandwidth for a single-node failure. Some exact-repair
constructions of MSR codes are investigated in
\cite{rashmi2011,shah2012,tamo2013,hou2016,ye2017,li2018,hou2019a,hou2019b}.
The sub-packetization of high-code-rate MSR codes
\cite{tamo2013,ye2017,li2018,hou2019a,hou2019b} is exponentially increasing
with the increasing of parameters $n$ and $k$. Some MDS array codes have been proposed
\cite{corbett2004row,blaum1995evenodd,Hou2018A,xu1999x,2018MDS,2021A} to achieve small repair
bandwidth under the condition of small sub-packetization; however, they either
only have small repair bandwidth for data nodes
\cite{corbett2004row,blaum1995evenodd,hou2018d,Hou2018A,xu1999x} or require large field sizes
\cite{2018MDS,2021A}.

\subsubsection{Locally Repairable Codes}
Locally repairable codes (LRCs) \cite{huang2012,2014Locally} are non-MDS codes that can achieve
small repair bandwidth for any single-node failure with sub-packetization
$\alpha=1$ by adding some local parity symbols. Consider the $(n,k,g)$
Azure-LRC \cite{huang2012} that is employed in Windows Azure storage systems,
we first create $n-k-g$ global parity symbols by encoding
all $k$ data symbols, divide the $k$ data symbols into $g$
groups and then create one local parity symbol for each group,
where $k$ is a multiple of $g$. In the $(n,k,g)$
Azure-LRC, we can repair any one symbol except $n-k-g$ global parity symbols
by locally downloading the other $k/g$ symbols in the group. Optimal-LRC \cite{2014optimal,2019How,2020Improved,2020On}
is another family of LRC that can locally repair any one symbol (including the global
parity symbols). One drawback of optimal-LRC is that existing constructions
\cite{2014optimal,2019How,2020Improved,2020On} can not support all the
parameters and the underlying field size should be large enough.
In this paper, we propose new
piggybacking codes by designing and adding the
$n$ piggyback functions into the $n$ newly created empty entries with no data symbols
that are also non-MDS codes and we
show that our piggybacking codes have lower repair
bandwidth when compared with Azure-LRC \cite{huang2012} and optimal-LRC under the same storage overhead and
fault-tolerance, for some parameters.


The remainder of this paper is organized as follows. Section \ref{sec:2}
presents two piggybacking coding designs. Section \ref{sec:3} shows new
piggybacking codes with $k=k'$ based on the first design. Section \ref{sec:4} shows another new piggybacking
codes based on the second design. Section \ref{sec:com} evaluates the repair bandwidth for our
piggybacking codes and the related codes. Section \ref{sec:con} concludes the paper.

\section{Two Piggybacking Designs}
\label{sec:2}
In this section, we first present two piggybacking designs
and then consider the repair bandwidth of any single-node failure for
the proposed piggybacking codes.

\subsection{Two Piggybacking Designs}
\label{sec:2.1}
Our two piggybacking designs can be represented by an $n\times (s+1)$
array, where $s$ is a positive integer, the $s+1$ symbols in each row are stored in a node, and $s+1\le n$.
We label the index of the $n$ rows from 1 to $n$ and the index of the
$s+1$ columns from 1 to $s+1$. Note that the symbols in each row are stored at the corresponding node.

In the following, we present our first piggybacking design.
In the piggybacking design, we first create $s$ instances of $(n,k)$ MDS codes
plus one instance of $(n,k')$ MDS codes and then design the piggyback functions,
where $k\geq k'>0$.
We describe the detailed structure of the design as follows.

\begin{enumerate}[]
\item First, we create $s+1$ instances of MDS codes over finite field $\mathbb{F}_q$, the first $s$
columns are the codewords of $(n,k)$ MDS codes and the last column is
a codeword of $(n,k')$ MDS codes, where $k'=k-h$, $h\in\{0,1,\ldots,k-1\}$ and $s-n+k+2\leq h$.
Let $\{ \mathbf{a_i}=( a_{i,1},a_{i,2},\ldots,a_{i,k} )^T \}_{i=1}^{s}$ be
the $sk$ data symbols in the first $s$ columns and $( a_{i,1},a_{i,2},\ldots,a_{i,k},\mathbf{P}_1^T\mathbf{a_i},$ $\ldots, \mathbf{P}_r^T\mathbf{a_i})^T$
be codeword $i$ of the $(n,k)$ MDS codes, where $i=1,2,\ldots,s$
and $\mathbf{P}_j^T=(\eta^{j-1},\eta^{2(j-1)},\ldots,\eta^{k(j-1)})$
with $j=1,2,\ldots,r,r=n-k$ and $\eta$ is a primitive element of $\mathbb{F}_q$.
Let $\{ \mathbf{b}=( b_{1},b_{2},\ldots,b_{k'} )^T \}$ be the $k'=k-h$
data symbols in the last column and
$( b_{1},b_{2},\ldots,b_{k'},\mathbf{Q}_1^T\mathbf{b},\ldots, \mathbf{Q}_{h+r}^T\mathbf{b})^T$
be a codeword of an $(n,k')$ MDS code, where
$\mathbf{Q}_j^T=(\eta^{j-1},\eta^{2(j-1)},\ldots,\eta^{k'(j-1)})$ with $j=1,2,\ldots,h+r$. Note that the total number of data symbols in this code is $sk+k'$.
\item Second, we add the {\em piggyback functions} of the symbols in the
first $s$ columns to the parity symbols in the last column, in order to reduce the
repair bandwidth. We divide the piggyback functions into two types: $(i)$ piggyback
functions of the symbols in the first $k'+1$ rows in the first $s$ columns; $(ii)$
piggyback functions of the symbols in the last $r+h-1$ rows in the first $s$ columns.
Fig. \ref{fig.1} shows the structure of two piggyback functions. For the first type of the piggyback functions, we add symbol $a_{i,j}$ (the symbol
in row $j$ and column $i$) to the parity symbol
$\mathbf{Q}_{2+(((j-1)s+i-1)\bmod(h+r-1))}^T\mathbf{b}$
(the symbol in row $k-h+2+(((j-1)s+i-1)\bmod(h+r-1))$ in the last column),
where $i\in\{1,2,\ldots,s\}$ and $j\in\{1,2,\ldots,k-h+1\}$.
For the second type of the piggyback functions, we add the symbol in row $j$
and column $i$ with $i\in\{1,2,\ldots,s\}$ and $j\in\{k-h+2,\ldots,k+r\}$
to the parity symbol $\mathbf{Q}_{t_{i,j}}^T\mathbf{b}$
(the symbol in row $k-h+t_{i,j}$ in the last column),
where
\begin{equation}
t_{i,j}=\left\{\begin{matrix}
i+j-k+h, \text{ if }\ i+j\leq n\\
i+j-n+1, \text{ if }\ i+j>n
\end{matrix}\right..
\label{eq:tij1}
\end{equation}
\end{enumerate}

The first piggybacking design described above is denoted by $\mathcal{C}(n,k,s,k')$.
When $h=0$, we have $k=k'$ and the created $s+1$ instances are codewords of $(n,k)$ MDS codes.
We will show the repair bandwidth in Section \ref{sec:3}.

We present the second piggybacking design as follows.
We create $s$ instances (in the first $s$ columns) of $(n,k)$ MDS codes over
finite field $\mathbb{F}_q$ and one additional empty column of length $n$, i.e.,
there is no data symbol in the last column, all the $n=k+r$ entries in the last columns
are piggyback functions. We design the $k+r$ piggyback functions in the last column
as follows. For $i\in\{1,2,\ldots,s\}$ and $j\in\{1,2,\ldots,k+r\}$, we add the
symbol in row $j$ and column $i$ to the symbol in row $\hat{t}_{i,j}$ in the last
column, where
\begin{equation}
\hat{t}_{i,j}=\left\{\begin{matrix}
i+j, \text{ if }\ i+j\leq n\\
i+j-n, \text{ if }\ i+j>n
\end{matrix}\right..
		\label{eq:tij2}
\end{equation}
We denote the second piggybacking design by $\mathcal{C}(n,k,s,k'=0)$,
the last parameter $k'=0$ denotes that there is not data symbol in the last column.
We will discuss the repair bandwidth in Section~\ref{sec:4}.

\begin{figure}[htpb]
	\centering
	\includegraphics[width=0.70\linewidth]{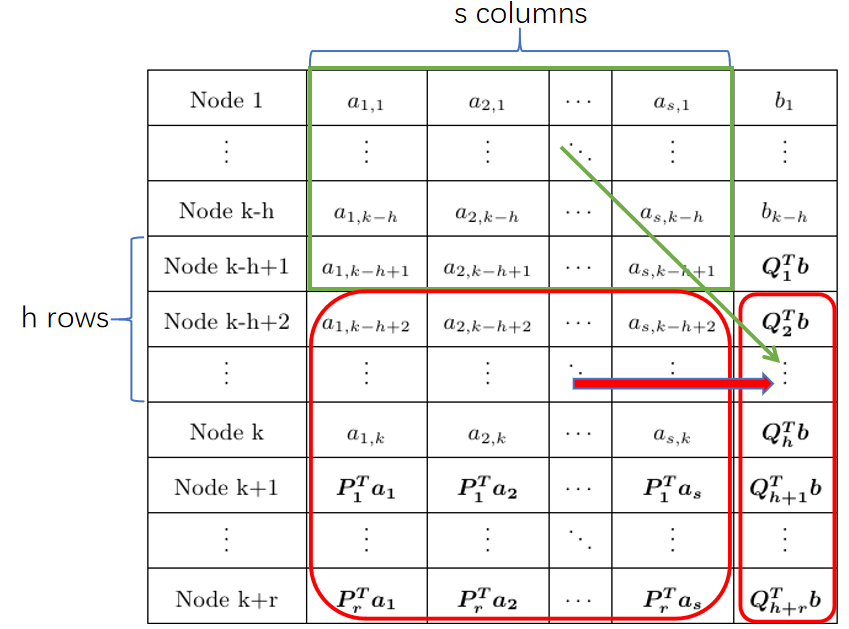}
	\caption{The structure of the first piggybacking design $\mathcal{C}(n,k,s,k')$, where $k'>0$.}
	\label{fig.1}
\end{figure}

Recall that in our first piggybacking design, the number of symbols to be added with piggyback functions in the last column
is $h-1+r$ and $h\geq s-r+2$ such that we can see that any two symbols used in computing both types of piggyback functions are from different nodes. Since
\begin{align*}
k-h+t_{i,j}=\left\{\begin{matrix}
k-h+i+j-k+h=i+j>j, \text{ when }\ i+j\leq n\\
k-h+i+j-n+1<j, \text{ when }\ i+j>n
\end{matrix}\right.,
\end{align*}
the symbol in row $j$ with $j\in\{k-h+2,k-h+3,\ldots,k+r\}$ and column $i$ with
$i\in\{1,2,\ldots,s\}$ is not added to the symbol in row $j$ and column $s+1$
in computing the second type of piggyback functions. In our second piggybacking design, since
\begin{align*}
\hat{t}_{i,j}=\left\{\begin{matrix} i+j>j, \text{ when }\ i+j\leq n\\
i+j-n<j, \text{ when }\ i+j>n
\end{matrix}\right.,
\end{align*}
the symbol in row $j$ with $j\in\{1,2,\ldots,k+r\}$ and column $i$ with
$i\in\{1,2,\ldots,s\}$ is not added to the symbol in row $j$ and column $s+1$
in computing the piggyback functions.

It is easy to see the MDS property of the first piggybacking design $\mathcal{C}(n,k,s,k')$.
We can retrieve all the other symbols in the first $s$ columns from any $k$ nodes (rows).
By computing all the piggyback functions and subtracting
all the piggyback functions from the corresponding parity symbols, we can retrieve
all the symbols in the last column.
Fig.~\ref{fig.2} shows an example of $\mathcal{C} (8,6,1,3)$.

\begin{figure}
	\centering
	\includegraphics[width=0.5\linewidth]{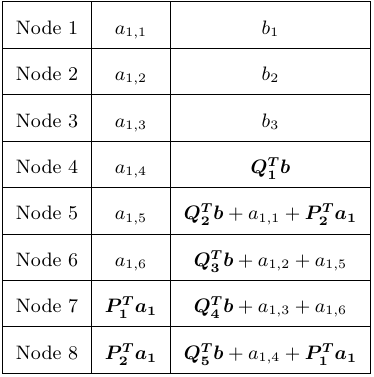}
	\caption{An example of $\mathcal{C} (n,k,s,k')$, where $(n,k,s,k')=(8,6,1,3)$.}
	\label{fig.2}
\end{figure}

Note that the piggyback function of the second piggybacking design is different from
that of the first piggybacking design. In the following of the section, we present
the repair method for the first piggybacking design. We will show the repair
method for the second piggybacking design in Section \ref{sec:4}.

For $i\in\{2,3,\ldots,h+r\}$, let $p_{i-1}$ be the piggyback function added on
the parity symbol $\mathbf{Q}_{i}^T\mathbf{b}$ and
$n_{i-1}$ be the number of symbols in the sum in computing piggyback function $p_{i-1}$.
{According to the design of piggyback functions, we have two set of symbols that
are used in computing the $h+r-1$ piggyback functions. The first set contains
$s(k-h+1)$ symbols (in the first $k-h+1$ rows and in the first $s$ columns) and
the second set contains $s(h+r-1)$ symbols (in the last $h+r-1$ rows and in the first
$s$ columns).
We have that the total number of symbols used in computing the $h+r-1$ piggyback functions
is $s(k+r)$, i.e.,
\begin{eqnarray}
	&&\sum_{i=1}^{h+r-1}n_i=s(k+r).\label{eq1}
\end{eqnarray}
In our first piggybacking design, the number of symbols used in computing each piggyback function is given
in the next lemma.
\begin{lemma}
In the first piggybacking design $\mathcal{C}(n,k,s,k')$ with $k'>0$, the number of symbols used in computing the piggyback function $p_{\tau}$ is
\begin{eqnarray}
	&&n_\tau= s+\left \lceil \frac{s(k-h+1)}{h+r-1} \right \rceil, \forall 1\leq \tau\leq (k-h+1)s-\lfloor \frac{(k-h+1)s}{h+r-1}\rfloor (h+r-1)\nonumber\\
	&&n_\tau= s+\left \lfloor \frac{s(k-h+1)}{h+r-1} \right \rfloor, \forall (k-h+1)s-\lfloor \frac{(k-h+1)s}{h+r-1}\rfloor (h+r-1)< \tau< h+r.\label{eq2}
\end{eqnarray}
\end{lemma}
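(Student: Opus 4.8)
The plan is to count, for each of the $h+r-1$ piggyback functions $p_\tau$ (by construction $p_\tau$ is the function added on the parity symbol $\mathbf{Q}_{\tau+1}^T\mathbf{b}$, so $\tau$ runs over $\{1,\ldots,h+r-1\}$), the number of first-type symbols it contains and the number of second-type symbols it contains, and then to add the two counts. Throughout I would abbreviate $N=s(k-h+1)$, $m=h+r-1$, and $\rho=N-\lfloor N/m\rfloor\,m$, and I would use the standing hypothesis $s-n+k+2\le h$, i.e.\ $s\le m-1$, to keep all row indices inside $\{1,\ldots,n\}$.

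For the first type: as $(i,j)$ ranges over $\{1,\ldots,s\}\times\{1,\ldots,k-h+1\}$, the integer $(j-1)s+i-1$ ranges bijectively over $\{0,1,\ldots,N-1\}$, so its residue modulo $m$ equals a prescribed $c\in\{0,\ldots,m-1\}$ for exactly $\lceil N/m\rceil$ of these pairs when $c<\rho$ and for exactly $\lfloor N/m\rfloor$ of them when $c\ge\rho$. Since the first-type symbol indexed by $(i,j)$ is added to $\mathbf{Q}_{2+(((j-1)s+i-1)\bmod m)}^T\mathbf{b}$, the function $p_\tau$ collects exactly the first-type symbols with residue $c=\tau-1$. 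Hence $p_\tau$ contains $\lceil N/m\rceil$ first-type symbols for $1\le\tau\le\rho$ and $\lfloor N/m\rfloor$ first-type symbols for $\rho<\tau\le m$.

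For the second type the claim will be that every $p_\tau$ contains exactly $s$ symbols. Fix a column $i\in\{1,\ldots,s\}$ and let $j$ run over the $m$ consecutive rows $k-h+2,\ldots,k+r$. Because $i\le s\le m-1=r+h-2$, the smallest of these rows already satisfies $i+j\le n$, so by \eqref{eq:tij1} the index $t_{i,j}=i+j-k+h$ increases through $i+2,i+3,\ldots,r+h$ while $i+j\le n$; at the next row $i+j=n+1$ it resets to $t_{i,j}=i+j-n+1=2$ and then increases through $2,3,\ldots,i+1$. Thus for each fixed $i$ the value $t_{i,j}$ hits every element of $\{2,\ldots,h+r\}$ exactly once, and summing over the $s$ columns shows that each index $t\in\{2,\ldots,h+r\}$---equivalently each piggyback function $p_{t-1}$---receives exactly $s$ second-type symbols.

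Adding the two counts gives $n_\tau=s+\lceil N/m\rceil$ for $1\le\tau\le\rho$ and $n_\tau=s+\lfloor N/m\rfloor$ for $\rho<\tau<h+r$, which is exactly \eqref{eq2} after substituting $N=s(k-h+1)$, $m=h+r-1$, and $\rho=(k-h+1)s-\lfloor\frac{(k-h+1)s}{h+r-1}\rfloor(h+r-1)$; as a consistency check, $\sum_{\tau}n_\tau=sm+N=s(h+r-1)+s(k-h+1)=s(k+r)$, which recovers \eqref{eq1}. I expect the only delicate point to be the second-type count: one must verify that within a fixed column the index $t_{i,j}$ wraps cleanly through all of $\{2,\ldots,h+r\}$ without repetition, which is precisely where both branches of \eqref{eq:tij1} and the inequality $s\le h+r-2$ are needed, and one should also note that the first-type and second-type symbols landing on a common $p_\tau$ are genuinely distinct (they occupy disjoint sets of rows), so the two counts really do add.
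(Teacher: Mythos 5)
Your proposal is correct and follows essentially the same route as the paper's proof: a residue count modulo $h+r-1$ over the bijection $(i,j)\mapsto (j-1)s+i-1$ for the first-type symbols, plus the observation that each piggyback function receives exactly $s$ second-type symbols, both resting on $s\le h+r-2$. The only cosmetic difference is that for the second type you fix a column $i$ and show $t_{i,j}$ sweeps $\{2,\ldots,h+r\}$ bijectively as $j$ varies, whereas the paper fixes the target function and exhibits the unique contributing row $j=k+r+1-i$ in each column; the two counts are dual and equivalent.
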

\begin{proof}
In the design of the piggyback function, we add the symbol
in row $j$ and column $i$ for $i\in\{1,2,\ldots,s\}$ and $j\in\{1,2,\ldots,k-h+1\}$
($(k-h+1)s$ symbol in the first set) to the symbol in row  $k-h+2+(((j-1)s+i-1)\bmod(h+r-1))$
(piggyback function $p_{1+(((j-1)s+i-1)\bmod(h+r-1))}$) in the last column.
Therefore, we can see that the symbols in row $j$ and column $i$ are added to
$p_1$ for all $i\in\{1,2,\ldots,s\}$, $j\in\{1,2,\ldots,k-h+1\}$ and
$(j-1)s+i-1$ is a multiple of $h+r-1$.
Note that
\[
\{(j-1)s+i-1|i=1,2,\ldots,s,j=1,2,\ldots,k-h+1\}=\{0,1,\ldots,(k-h+1)s-1\},
\]
we need to choose the symbol in row $j$ and column $i$ for $i\in\{1,2,\ldots,s\}$,
$j\in\{1,2,\ldots,k-h+1\}$ such that $\eta$ is a multiple of $h+r-1$ for all
$\eta\in\{0,1,\ldots,(k-h+1)s-1\}$.
The number of symbols in the first set
used in computing $p_1$ is $\lceil \frac{(k-h+1)s}{h+r-1}\rceil$.
Given integer $\tau$ with $1\leq \tau\leq h+r-1$,
we add the symbol in row $j$ and column $i$ for $i\in\{1,2,\ldots,s\}$,
$j\in\{1,2,\ldots,k-h+1\}$ such that $\eta-\tau+1$ is a multiple of $h+r-1$ for all
$\eta\in\{0,1,\ldots,(k-h+1)s-1\}$ to $p_{\tau}$.
The number of symbols in the first set
used in computing $p_{\tau}$ is $\lceil \frac{(k-h+1)s}{h+r-1}\rceil$ if
$(k-h+1)s-\tau\geq \lfloor \frac{(k-h+1)s}{h+r-1}\rfloor (h+r-1)$ and
$\lfloor \frac{(k-h+1)s}{h+r-1}\rfloor$ if
$(k-h+1)s-\tau\leq \lfloor \frac{(k-h+1)s}{h+r-1}\rfloor (h+r-1)$.
Therefore, the number of symbols in the first set
used in computing $p_{\tau}$ is $\lceil \frac{(k-h+1)s}{h+r-1}\rceil$ if
$1\leq \tau\leq (k-h+1)s-\lfloor \frac{(k-h+1)s}{h+r-1}\rfloor (h+r-1)$ and
$\lfloor \frac{(k-h+1)s}{h+r-1}\rfloor$ if
$h+r-1\geq \tau\geq (k-h+1)s-\lfloor \frac{(k-h+1)s}{h+r-1}\rfloor (h+r-1)+1$.

For the $(h+r-1)s$ symbols in the second set, we add the symbol in row $j$
and column $i$ with $i\in\{1,2,\ldots,s\}$ and $j\in\{k-h+2,\ldots,k+r\}$
to the symbol in row $k-h+t_{i,j}$ (piggyback function {$p_{t_{i,j}-1}$})
in the last column, where $t_{i,j}$ is given in Eq. \eqref{eq:tij1}.
Consider the piggyback function $p_1$, i.e., $t_{i,j}=2$.
When $i=1$, according to Eq. \eqref{eq:tij1}, only when $j=k+r$
for $j\in\{k-h+2,\ldots,k+r\}$, we can obtain $t_{i,j}=2$.
When $i=2$, according to Eq. \eqref{eq:tij1}, only when $j=k+r-1$
for $j\in\{k-h+2,\ldots,k+r\}$, we can obtain $t_{i,j}=2$.
Similarly, for any $i$ with $i\in\{1,2,\ldots,s\}$, only when $j=k+r+1-i$
for $j\in\{k-h+2,\ldots,k+r\}$, we can obtain $t_{i,j}=2$. Since
$h\geq s-r+2$, we have $j=k+r+1-i\geq k+r+1-s>k-h+2$, which is within
$\{k-h+2,\ldots,k+r\}$. In other words, for any $i$ with
$i\in\{1,2,\ldots,s\}$, we can find one and only one $j$ with
$j\in\{k-h+2,\ldots,k+r\}$ such that $t_{i,j}=2$.
The number of symbols in the second set used in computing $p_{1}$ is $s$.
Similarly, we can show that the number of symbols in the second set
used in computing $p_{\tau}$ is $s$ for all $\tau=1,2,\ldots,h+r-1$.
Therefore, the total number of symbols
used in computing $p_{\tau}$ is $n_{\tau}=s+\lceil \frac{(k-h+1)s}{h+r-1}\rceil$ for
$\tau=1,2,\ldots,(k-h+1)s-\lfloor \frac{(k-h+1)s}{h+r-1}\rfloor (h+r-1)$
and $n_{\tau}=s+\lfloor \frac{(k-h+1)s}{h+r-1}\rfloor$ for
$\tau=(k-h+1)s-\lfloor \frac{(k-h+1)s}{h+r-1}\rfloor (h+r-1)+1,(k-h+1)s-\lfloor \frac{(k-h+1)s}{h+r-1}\rfloor (h+r-1)+2,\ldots,h+r-1$.
\end{proof}

Next lemma shows that any two symbols in one row in the first $s$ columns are used
in computing two different piggyback functions.
\begin{lemma}
In the first piggybacking design, if $s+2\leq h+r$, then the symbol in row $j$ in column $i_1$ and the symbol in row $j$ in column $i_2$
are used in computing two different piggyback functions, for any $j\in\{1,2,\ldots,k+r\}$
and $i_1\neq i_2\in\{1,2,\ldots,s\}$.
\label{lm:dif-piggy}
\end{lemma}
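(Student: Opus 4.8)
The plan is to reduce the statement to an injectivity claim: for each fixed row index $j$, show that the map sending a column index $i\in\{1,2,\ldots,s\}$ to the index of the piggyback function that receives the symbol in row $j$ and column $i$ is injective. First I would note that within a single row $j$ there is no mixing of the two types of piggyback functions: if $j\in\{1,\ldots,k-h+1\}$ then all $s$ symbols in row $j$ feed first-type piggyback functions, and if $j\in\{k-h+2,\ldots,k+r\}$ they all feed second-type piggyback functions, so the two cases can be handled separately.

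For the first type, the symbol in row $j$ and column $i$ is added to $p_{1+(((j-1)s+i-1)\bmod(h+r-1))}$. For fixed $j$, as $i$ runs over $\{1,\ldots,s\}$ the integers $(j-1)s+i-1$ form a block of $s$ consecutive integers. Since the hypothesis $s+2\le h+r$ gives $s\le h+r-2<h+r-1$, any block of $s$ consecutive integers is mapped injectively by reduction modulo $h+r-1$, so the corresponding piggyback indices are pairwise distinct.

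For the second type, the symbol in row $j$ and column $i$ is added to $p_{t_{i,j}-1}$, with $t_{i,j}\in\{2,3,\ldots,h+r\}$ given by \eqref{eq:tij1}. The key step is to verify that, for fixed $j$, increasing $i$ by $1$ advances $t_{i,j}$ by exactly one step cyclically on $\{2,\ldots,h+r\}$: when $i+j<n$ both $t_{i,j}$ and $t_{i+1,j}$ use the first branch and $t_{i+1,j}=t_{i,j}+1$; when $i+j=n$ one has $t_{i,j}=h+r$ (first branch) and $t_{i+1,j}=2$ (second branch); and when $i+j>n$ both use the second branch and $t_{i+1,j}=t_{i,j}+1$. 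Identifying $\{2,\ldots,h+r\}$ with $\mathbb{Z}/(h+r-1)\mathbb{Z}$ via $v\mapsto v-2$, this shows $t_{i,j}\equiv t_{1,j}+(i-1)\pmod{h+r-1}$, so $t_{1,j},t_{2,j},\ldots,t_{s,j}$ is a run of $s$ consecutive residues modulo $h+r-1$; since $s\le h+r-2<h+r-1$ they are pairwise distinct, hence so are the indices $t_{i,j}-1$. Alternatively one may split $\{1,\ldots,s\}$ into the sub-block with $i+j\le n$ and the sub-block with $i+j>n$, observe that $t_{i,j}$ is strictly increasing with unit step on each sub-block, and check that the two resulting value-ranges are disjoint, which again reduces to $s<h+r$.

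I expect the only delicate point, and the main obstacle, to be the bookkeeping around the wrap-around in \eqref{eq:tij1}: one must check that the boundary case $i+j=n$ glues the two branches of the definition into a single cyclic run that contains no repeated value, and this is exactly where the bound $s+2\le h+r$ is consumed. The remaining content is an elementary counting and modular-arithmetic argument.
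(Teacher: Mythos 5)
Your proposal is correct and follows essentially the same route as the paper: fix the row $j$, treat the two types of piggyback functions separately, and use $s\le h+r-2$ to show that the map $i\mapsto$ piggyback index is injective. Your cyclic unit-step formulation of the second-type case (and the alternative split into the $i+j\le n$ and $i+j>n$ sub-blocks) is just a repackaging of the paper's case analysis on the two branches of $t_{i,j}$, consuming the hypothesis $s+2\le h+r$ in exactly the same way.
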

\begin{proof}
When $j\in\{1,2,\ldots,k-h+1\}$,
we add the symbol in row $j$ and column $i_1$ to the symbol
in row $k-h+2+(((j-1)s+i_1-1)\bmod(h+r-1))$ in the last column.
Similarly, the symbol in row $j$ and column $i_2$ is added to the symbol
in row $k-h+2+(((j-1)s+i_2-1)\bmod(h+r-1))$ in the last column.
Suppose that the two symbols in row $j$ and columns $i_1,i_2$ are added to the same
piggyback function, we obtain that $((j-1)s+i_1-1)\bmod(h+r-1)=((j-1)s+i_2-1)\bmod(h+r-1)$,
i.e., $i_1=i_2\bmod(h+r-1)$, which contradicts to $i_1\neq i_2\in\{1,2,\ldots,s\}$
and $s+2\leq h+r$.

When $j\in\{k-h+2,k-h+2,\ldots,k+r\}$, we add two symbols in row $j$
column $i_1$ and row $j$ column $i_2$ to the symbol in the last column
in row $k-h+t_{i_1,j}$ and row $k-h+t_{i_2,j}$, respectively,
where $i_1\neq i_2\in\{1,2,\ldots,s\}$ and
$$t_{i,j}=\left\{\begin{matrix}
i+j-k+h, \text{ if }\ i+j\leq n\\
i+j-n+1, \text{ if }\ i+j>n
\end{matrix}\right..$$
Suppose that the two symbols in row $j$ and columns $i_1,i_2$ are added to the same
piggyback function, we obtain that $t_{i_1,j}=t_{i_2,j}$.
If $i_1+j\leq n$ and $i_2+j\leq n$, we have that $i_1=i_2$ which contradicts to
$i_1\neq i_2$. If $i_1+j\leq n$ and $i_2+j> n$, we have that $i_1+r+h-1=i_2$ which contradicts to
$i_1\neq i_2\in\{1,2,\ldots,s\}$ and $s+2\leq h+r$. Similarly, we can obtain a
contradiction if $i_1+j> n$ and $i_2+j\leq n$.
If $i_1+j> n$ and $i_2+j> n$, we have that $i_1=i_2$ which contradicts to
$i_1\neq i_2$. Therefore, in our first piggybacking design, any two symbols in the same row are not used in computing
the same piggyback function.
\end{proof}

\subsection{Repair Process}
\label{sec:2.2}
In the fisrt piggybacking design, suppose that node $f$ fails, we present the repair procedure of node $f$ as
follows, where $f\in\{1,2,\ldots,k+r\}$.

We first consider that $f\in\{1,2,\ldots,k-h+1\}$, each of the first $s$ symbols
$\{ a_{1,f},a_{2,f},\ldots,a_{s,f} \}$ stored in node $f$ is used in
computing one piggyback function and we denote the corresponding piggyback
function associated with symbol $a_{i,f}$ by $p_{t_{i,f}}$, where $i=1,2,\ldots,s$
and $t_{i,f}\in\{1,2,\ldots,h+r-1\}$.
We download $k-h$ symbols in the last column from nodes
$\{1,2,\ldots,k-h+1\}\setminus\{f\}$ to recover $s+1$ symbols
$b_{f},\mathbf{Q}_{t_{1,f}+1}^T\mathbf{b},\mathbf{Q}_{t_{2,f}+1}^T\mathbf{b},
\ldots,\mathbf{Q}_{t_{s,f}+1}^T\mathbf{b}$ when $f\in\{1,2,\ldots,k-h\}$, or 
$\mathbf{Q}_{1}^T\mathbf{b},\mathbf{Q}_{t_{1,f}+1}^T\mathbf{b},\mathbf{Q}_{t_{2,f}+1}^T\mathbf{b},
\ldots,\mathbf{Q}_{t_{s,f}+1}^T\mathbf{b}$ when $f=k-h+1$, according to the MDS property of the last instance.
{By Lemma \ref{lm:dif-piggy}, any two
symbols in one row are used in computing two different piggyback functions.
The piggyback function $p_{t_{i,f}}$ is computed by $n_{t_{i,f}}$ symbols, where
one symbol is $a_{i,f}$ and the other $n_{t_{i,f}}-1$ symbols are not in node $f$ (row $f$).}
Therefore, we can repair the symbol $a_{i,f}$ in node $f$ by downloading the parity
symbol $\mathbf{Q}_{t_{i,f}+1}^T\mathbf{b}+p_{t_{i,f}}$ and $n_{t_{i,f}}-1$ symbols which are
used to compute $p_{t_{i,f}}$ except $a_{i,f}$, where $i=1,2,\ldots,s$.
The repair bandwidth is $k-h+\sum_{i=1}^{s}n_{t_{i,f}}$ symbols.

When $f\in\{k-h+2,k-h+3,\ldots,n\}$, each of the first $s$ symbols stored in
node $f$ is used in computing one piggyback function and we denote the corresponding
piggyback function associated with the symbol in row $f$ and column $i$ by
$p_{t_{i,f}}$, where $i\in\{1,2,\ldots,s\}$ and $t_{i,f}\in\{1,2,\ldots,h+r-1\}$.
We download $k-h$ symbols in the last column from nodes $\{1,2,\ldots,k-h\}$
to recover $s+1$ symbols $\mathbf{Q}_{f-k+h}^T\mathbf{b},
\mathbf{Q}_{t_{1,f}+1}^T\mathbf{b},\mathbf{Q}_{t_{2,f}+1}^T\mathbf{b},\ldots,
\mathbf{Q}_{t_{s,f}+1}^T\mathbf{b}$,
according to the MDS property of the last instance.
{Recall that any symbol in row $f$ in the first $s$ columns
is not used in computing the piggyback function in row $f$.}
We can recover the last symbol $\mathbf{Q}_{f-k+h}^T\mathbf{b}+p_{f-k+h-1}$ stored
in node $f$ by downloading $n_{f-k+h-1}$ symbols which are used to compute the
piggyback function $p_{f-k+h-1}$.
{Recall  that any two symbols in one row are used in computing
two different piggyback functions by Lemma \ref{lm:dif-piggy}.
The piggyback function $p_{t_{i,f}}$ is computed by $n_{t_{i,f}}$ symbols, where
one symbol is in row $f$ column $i$ and the other $n_{t_{i,f}}-1$ symbols are not
in node $f$ (row $f$).}
We can repair the symbol in row $f$ and column $i$, for $i\in\{1,2,\ldots,s\}$,
by downloading symbol $\mathbf{Q}_{t_{i,f}+1}^T\mathbf{b}+p_{t_{i,f}}$ and $n_{t_{i,f}}-1$
symbols which are used to compute $p_{t_{i,f}}$ except the symbol in row $f$
and column $i$.
The repair bandwidth is $k-h+n_{f-k+h-1}+\sum_{i=1}^{s}n_{t_{i,f}}$ symbols.

Consider the repair method of the code $\mathcal{C} (8,6,1,3)$ in Fig. \ref{fig.2}.
Suppose that node 1 fails, we can first download 3 symbols
$b_2,b_3,\mathbf{Q}_{1}^T\mathbf{b}$ to obtain the two symbols
$b_1,\mathbf{Q}_{2}^T\mathbf{b}$, according to the MDS property. Then, we
download the following 2 symbols
\[
\mathbf{Q}_{2}^T\mathbf{b}+a_{1,1}+\mathbf{P}_{2}^T\mathbf{a}_1,\mathbf{P}_{2}^T\mathbf{a}_1
\]
to recover $a_{1,1}$. The repair bandwidth of node 1 is 5 symbols.
Similarly, we can show that the repair bandwidth of any single-node failure
among nodes 2 to 4 is 5 symbols.

Suppose that node 5 fails, we can download the 3 symbols
$b_1,b_2,b_{3}$ to obtain $\mathbf{Q}_{2}^T\mathbf{b},\mathbf{Q}_{3}^T\mathbf{b}$,
according to the MDS poverty. Then, we download the 2 symbols
$a_{1,1},\mathbf{P}_{2}^T\mathbf{a}_1$ to recover $\mathbf{Q}_{2}^T\mathbf{b}+p_1$.
Finally, we download the 2 symbols
$\mathbf{Q}_{3}^T\mathbf{b}+p_2,a_{1,2}$
to recover $a_{1,5}$.
The repair bandwidth of node 5 is 7 symbols.
Similarly, we can show that the repair bandwidth of any single-node failure
among nodes 6 to 8 is 7 symbols.

\subsection{Average Repair Bandwidth Ratio of Code $\mathcal{C} (n,k,s,k'), k'>0$}
\label{sec:2.3}
Define the {\em average repair bandwidth} of data nodes (or parity nodes or all nodes) as the ratio of the
summation of repair bandwidth for each of $k$ data nodes (or $r$ parity nodes or all $n$ nodes) to the number of
data nodes $k$ (or the number of parity nodes $r$ or the number of all nodes $n$).
Define the {\em average repair bandwidth ratio} of data nodes (or parity nodes or all nodes) as the ratio of the
average repair bandwidth of $k$ data nodes (or $r$ parity nodes or all $n$ nodes) to the number of
data symbols.

In the following, we present an upper bound of the average repair bandwidth ratio
of all $n$ nodes, denoted by $\gamma^{all}$, for the proposed codes $\mathcal{C} (n,k,s,k')$ when $k'>0$.

\begin{theorem}
	\label{th1}
When $k'>0$, the average repair bandwidth ratio of all $n$ nodes, $\gamma^{all}$, of
codes $\mathcal{C} (n,k,s,k')$, is upper bounded by
	\begin{eqnarray}
		\gamma^{all}&\leq&\frac{(u+s)^2(h+r-1)}{(k+r)(sk+k-h)}+\frac{k-h+s}{sk+k-h},\nonumber
	\end{eqnarray}
	where $u=\left \lceil \frac{s(k-h+1)}{h+r-1} \right \rceil$.
\end{theorem}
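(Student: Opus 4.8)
The plan is to add up the per-node repair bandwidths derived in Section~\ref{sec:2.2} and then simplify the resulting sums using \eqref{eq1} and the values of $n_\tau$ from \eqref{eq2}. Recall that the repair scheme of Section~\ref{sec:2.2} repairs node $f$ with bandwidth $k-h+\sum_{i=1}^{s}n_{t_{i,f}}$ when $f\in\{1,\ldots,k-h+1\}$ and with bandwidth $k-h+n_{f-k+h-1}+\sum_{i=1}^{s}n_{t_{i,f}}$ when $f\in\{k-h+2,\ldots,k+r\}$, where $p_{t_{i,f}}$ is the unique piggyback function that uses the symbol in row $f$ and column $i$. Summing over all $n=k+r$ nodes, the total repair bandwidth equals
\begin{equation*}
(k+r)(k-h)+\sum_{f=1}^{k+r}\sum_{i=1}^{s}n_{t_{i,f}}+\sum_{f=k-h+2}^{k+r}n_{f-k+h-1}.
\end{equation*}

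First I would handle the two remaining sums. Re-indexing $\tau=f-k+h-1$, the last sum is $\sum_{\tau=1}^{h+r-1}n_\tau=s(k+r)$ by \eqref{eq1}. For the double sum, the key point is that the two types of piggyback functions together partition the $s(k+r)$ symbols lying in the first $s$ columns: rows $1,\ldots,k-h+1$ supply the first type and rows $k-h+2,\ldots,k+r$ supply the second type, and each such symbol is used in exactly one piggyback function. Hence, as $(f,i)$ ranges over all $s(k+r)$ positions and we add the size $n_{t_{i,f}}$ of the piggyback function containing $(f,i)$, each $p_\tau$ is counted exactly $n_\tau$ times, so $\sum_{f,i}n_{t_{i,f}}=\sum_{\tau=1}^{h+r-1}n_\tau^2$. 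By \eqref{eq2} every $n_\tau$ is at most $s+u$ with $u=\lceil s(k-h+1)/(h+r-1)\rceil$, so $\sum_{\tau=1}^{h+r-1}n_\tau^2\le(h+r-1)(s+u)^2$.

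Combining these estimates, the total repair bandwidth over all $n$ nodes is at most $(k+r)(k-h+s)+(h+r-1)(s+u)^2$. Dividing by $n=k+r$ gives the average repair bandwidth of all nodes, and dividing that by the number $sk+k-h$ of data symbols yields
\begin{equation*}
\gamma^{all}\le\frac{(k+r)(k-h+s)+(h+r-1)(s+u)^2}{(k+r)(sk+k-h)}=\frac{(s+u)^2(h+r-1)}{(k+r)(sk+k-h)}+\frac{k-h+s}{sk+k-h},
\end{equation*}
as claimed. The only substantive step is the identity $\sum_{f,i}n_{t_{i,f}}=\sum_{\tau}n_\tau^2$, i.e.\ recognizing that the piggyback functions form a partition of the symbols in the first $s$ columns; once that is in hand, the bound follows by substituting \eqref{eq1}, the elementary estimate $n_\tau\le s+u$, and arithmetic. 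I therefore expect no real obstacle beyond setting up this bookkeeping carefully and making sure the counting of last-column downloads matches the formulas already established in Section~\ref{sec:2.2}.
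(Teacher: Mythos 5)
Your proposal is correct and follows essentially the same route as the paper's own proof: the same per-node bandwidth decomposition from Section~\ref{sec:2.2}, the same key identity $\sum_{f}\sum_{i}n_{t_{i,f}}=\sum_{\tau}n_{\tau}^{2}$ obtained from the fact that each symbol in the first $s$ columns enters exactly one piggyback function, and the same use of \eqref{eq1} and the bound $n_{\tau}\leq s+u$ from \eqref{eq2} before dividing by $(k+r)(sk+k-h)$. No gaps to report.
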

\begin{proof}
{Suppose that node $f$ fails, where $f\in\{1,2,\ldots,n\}$, we
will count the repair bandwidth of node $f$ as follows.
Recall that the symbol in row $f$ and column $i$ is used to
compute the piggyback function $p_{t_{i,f}}$, where $f\in\{1,2,\ldots,n\}$
and $i\in\{1,2,\ldots,s\}$.} Recall also that the number of symbols in the sum in
computing piggyback function $p_{t_{i,f}}$ is $n_{t_{i,f}}$.
When $f\in\{1,2,\ldots,k-h+1\}$, according to the repair method in Section \ref{sec:2.2}, the repair bandwidth
of node $f$ is $(k-h+\sum_{i=1}^{s}n_{t_{i,f}})$ symbols.
When $f\in\{k-h+2,\ldots,n\}$, according to the repair method in Section \ref{sec:2.2},
the repair bandwidth of node $f$ is $(k-h+n_{f-k+h-1}+\sum_{i=1}^{s}n_{t_{i,f}})$ symbols.
The summation of the repair bandwidth for each of the $n$ nodes is
	 \begin{eqnarray}
&&\sum_{f=1}^{k-h+1}(k-h+\sum_{i=1}^{s}n_{t_{i,f}})+
\sum_{f=k-h+2}^{k+r}(k-h+n_{f-k+h-1}+\sum_{i=1}^{s}n_{t_{i,f}})\nonumber\\
=&&(k+r)(k-h)+\sum_{f=1}^{k+r}(\sum_{i=1}^{s}n_{t_{i,f}})+\sum_{f=k-h+2}^{k+r}n_{f-k+h-1}.\label{eq:rep-sum}
	 \end{eqnarray}
Next, we show that
\begin{equation}
\sum_{f=1}^{k+r}(\sum_{i=1}^{s}n_{t_{i,f}})=\sum_{i=1}^{h+r-1}n_i^2.
\label{eq:rep-sum1}
\end{equation}
Note that $\sum_{i=1}^{k+r}(\sum_{i=1}^{s}n_{t_{i,f}})$ is the summation of the
repair bandwidth for each of the $(k+r)s$ symbols in the first $s$ columns.
The $(k+r)s$ symbols are used to compute the $h+r-1$ piggyback functions and each symbol
is used for only one piggyback function. For $i=1,2,\ldots,h+r-1$, the piggyback function
$p_i$ is the summation of the $n_i$ symbols in the first $s$ columns and can recover
any one of the $n_i$ symbols (used in computing $p_i$) with repair bandwidth
$n_i$ symbols. Therefore, the summation of the repair bandwidth for each of the $n_i$
symbols (used in computing $p_i$) is $n_i^2$. In other words, the summation of the
repair bandwidth for each of the $(k+r)s$ symbols in the first $s$ columns
is the summation of the repair bandwidth for each of all the $(k+r)s$ symbols used for
computing all $h+r-1$ piggyback functions, i.e., Eq. \eqref{eq:rep-sum1} holds.

By Eq. \eqref{eq1}, we have  $\sum_{f=k-h+2}^{n}n_{f-k+h-1}=\sum_{i=1}^{h+r-1}n_i=s(k+r)$.
By Eq. \eqref{eq2}, we have $n_i\leq u+s, \forall i\in\{1,2,\ldots,h+r\}$,
where $u=\left \lceil \frac{s(k-h+1)}{h+r-1} \right \rceil$. According to Eq. \eqref{eq:rep-sum}
and Eq. \eqref{eq:rep-sum1}, we have
 	 \begin{eqnarray}
 		\gamma^{all}&=&\frac{(k+r)(k-h)+\sum_{i=1}^{h+r-1}n_i^2}{(k+r)(sk+k-h)}
 		+\frac{\sum_{f=k-h+2}^{n}n_{f-k+h-1}}{(k+r)(sk+k-h)}\nonumber\\
 		&=&\frac{(k+r)(k-h+s)+\sum_{i=1}^{h+r-1}n_i^2}{(k+r)(sk+k-h)}\nonumber\\
 &\leq&\frac{k-h+s}{sk+k-h}+\frac{(u+s)^2(h+r-1)}{(k+r)(sk+k-h)}.\nonumber
 	\end{eqnarray}
\end{proof}

Define {\em storage overhead} to be the ratio of total number of symbols stored in the $n$
nodes to the total number of data symbols. We have that the storage overhead $s^*$ of
codes $\mathcal{C}(n,k,s,k')$ satisfies that
 	 \begin{eqnarray}
 		&&\frac{k+r}{k}\leq s^*=\frac{(s+1)(k+r)}{sk+k-h}\leq\frac{(s+1)(k+r)}{sk}=(\frac{s+1}{s})\cdot\frac{k+r}{k}.\nonumber
 	\end{eqnarray}

\section{Piggybacking Codes $\mathcal{C}(n,k,s,k'=k)$}
\label{sec:3}
In this section, we consider the special case of codes $\mathcal{C}(n,k,s,k')$ with
$k'=k$. When $k'=k$, we have $s\leq r-2$ and the created $s+1$ instances are
codewords of $(n,k)$ MDS codes and the codes $\mathcal{C}(n,k,s,k'=k)$ are MDS codes.
The structure of $\mathcal{C}(n,k,s,k'=k)$ is shown in Fig. \ref{fig.3}.
\begin{figure}[htpb]
	\centering
	\includegraphics[width=0.60\linewidth]{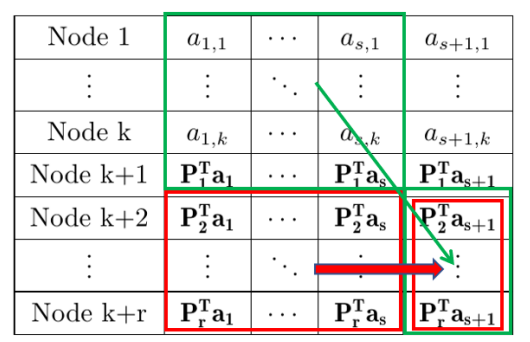}
	\caption{The design of code $\mathcal{C}(n,k,s,k'=k),s\leq r-2$.}
	\label{fig.3}
\end{figure}

In $\mathcal{C}(n,k,s,k'=k)$, we have $r-1$ piggyback functions $\{p_i\}_{i=1}^{r-1}$,
and each piggyback function $p_i$ is a linear combination of $n_i$ symbols that are
located in the first $s$ columns of the $n\times (s+1)$ array, where
$i\in\{1,2,\ldots,r-1\}$. According to Eq. \eqref{eq1}, we have
\begin{eqnarray}
	&&\sum_{i=1}^{r-1}n_i=s(k+r).\label{eq7}
\end{eqnarray}

The average repair bandwidth ratio of all nodes of $\mathcal{C}(n,k,s,k'=k)$ is
given in the next theorem.

\begin{theorem}
	\label{th2}
The lower bound and the upper bound of the average repair bandwidth ratio
of all nodes $\gamma^{all}_{0}$ of $\mathcal{C}(n,k,s,k'=k)$ is
	\begin{eqnarray}
		&&\gamma^{all}_{0,min}=\frac{k+s}{(s+1)k}+\frac{s^2(k+r)}{(r-1)(s+1)k} \text{ and }\label{eq8}\\
		&&\gamma^{all}_{0,max}=\gamma^{all}_{0,min}+\frac{r-1}{4k(k+r)(s+1)},\label{eq9}
	\end{eqnarray}
respectively.
\end{theorem}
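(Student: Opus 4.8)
The plan is to reuse the computation already carried out in the proof of Theorem \ref{th1} and specialise it to the case $h=0$ (equivalently $k'=k$). Setting $h=0$ in the identity established there, namely $\gamma^{all}=\frac{(k+r)(k-h+s)+\sum_{i=1}^{h+r-1}n_i^2}{(k+r)(sk+k-h)}$, and using $sk+k=k(s+1)$, gives
\[
\gamma^{all}_{0}=\frac{(k+r)(k+s)+\sum_{i=1}^{r-1}n_i^2}{k(k+r)(s+1)}=\frac{k+s}{k(s+1)}+\frac{\sum_{i=1}^{r-1}n_i^2}{k(k+r)(s+1)} .
\]
So the whole task reduces to bounding $\sum_{i=1}^{r-1}n_i^2$ using only the two facts already available: the total $\sum_{i=1}^{r-1}n_i=s(k+r)$ from Eq. \eqref{eq7}, and the statement of Lemma 1 with $h=0$, which says that every $n_i$ equals one of the two consecutive integers $s+\lfloor \frac{s(k+1)}{r-1}\rfloor$ and $s+\lceil \frac{s(k+1)}{r-1}\rceil$.

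Next I would set $a=s+\lfloor \frac{s(k+1)}{r-1}\rfloor$ and let $t$ be the number of indices $i$ with $n_i=a+1$, so the remaining $r-1-t$ indices have $n_i=a$; the constraint $\sum_{i=1}^{r-1}n_i=s(k+r)$ forces $t=s(k+r)-a(r-1)$, i.e. $t$ is the residue of $s(k+1)$ modulo $r-1$. Writing $\bar n=\frac{s(k+r)}{r-1}$ for the average value, a one-line computation — expand $\sum_{i=1}^{r-1}(n_i-\bar n)^2$ using $\sum_{i=1}^{r-1}(n_i-\bar n)=0$ and $n_i-\bar n\in\{-\tfrac{t}{r-1},\tfrac{r-1-t}{r-1}\}$ — gives
\[
\sum_{i=1}^{r-1}n_i^2=(r-1)\bar n^2+\frac{t(r-1-t)}{r-1}=\frac{s^2(k+r)^2}{r-1}+\frac{t(r-1-t)}{r-1}.
\]
I would then bound the correction term: $t(r-1-t)\ge 0$, with equality exactly when $(r-1)\mid s(k+1)$, and by AM--GM $t(r-1-t)\le\left(\frac{r-1}{2}\right)^2$, so $0\le\frac{t(r-1-t)}{r-1}\le\frac{r-1}{4}$. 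Substituting the two extreme values of $\sum_{i=1}^{r-1}n_i^2$ into the displayed expression for $\gamma^{all}_{0}$ yields the lower bound $\frac{k+s}{k(s+1)}+\frac{s^2(k+r)}{(r-1)k(s+1)}$ and the upper bound obtained by adding $\frac{r-1}{4k(k+r)(s+1)}$ to it, which are precisely \eqref{eq8} and \eqref{eq9}.

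I do not expect a serious obstacle; the points needing care are (i) confirming that the derivation of the $\gamma^{all}$ identity in the proof of Theorem \ref{th1} still goes through at $h=0$ — in particular that the repair procedure of Section \ref{sec:2.2} remains valid when node $k-h+1=k+1$ is a parity node, and that Eq. \eqref{eq1} becomes exactly Eq. \eqref{eq7} — and (ii) verifying the closed form for $\sum_{i=1}^{r-1}n_i^2$, the key observation being that once $\{n_i\}$ is known to be a multiset of only two consecutive integers with prescribed sum, $\sum_{i=1}^{r-1}n_i^2$ is determined up to the single quantity $t(r-1-t)$. I would also remark in passing that the standing constraint $s\le r-2$ ensures $r-1\ge 1$, so all denominators above are positive and the stated bounds are meaningful.
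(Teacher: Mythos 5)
Your argument is correct and follows essentially the same path as the paper: specialise the expression for $\gamma^{all}$ from the proof of Theorem~\ref{th1} to $h=0$, use the fact from Lemma~1 that the $n_i$ form a two-valued multiset of consecutive integers with $t$ large values (where $t$ is the residue of $s(k+1)$ modulo $r-1$), obtain $\sum_{i=1}^{r-1}n_i^2=\frac{s^2(k+r)^2}{r-1}+\frac{t(r-1-t)}{r-1}$, and bound the correction term via $0\le t(r-1-t)\le\frac{(r-1)^2}{4}$. The only cosmetic difference is that you expand $\sum n_i^2$ via the variance identity $\sum n_i^2=(r-1)\bar n^2+\sum(n_i-\bar n)^2$, whereas the paper uses the equivalent pairwise-differences form $\sum n_i^2=\frac{(\sum n_i)^2+\sum_{i< j}(n_i-n_j)^2}{r-1}$.
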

\begin{proof}
By Eq. \eqref{eq:rep-sum}, the summation of the repair bandwidth for each of the $n$ nodes is
	\begin{eqnarray}
		&&(k+r)k+\sum_{i=1}^{r-1}n_i^2+\sum_{i=1}^{r-1}n_{i}.\nonumber
	\end{eqnarray}
By Eq. \eqref{eq7}, we have
	\begin{eqnarray}
		\gamma^{all}_0&=&\frac{(k+r)k+\sum_{i=1}^{r-1}n_i^2+\sum_{i=1}^{r-1}n_{i}}{(k+r)(s+1)k}\nonumber\\
		&=&\frac{(k+r)(k+s)+\sum_{i=1}^{r-1}n_i^2}{(k+r)(s+1)k}\nonumber\\
		&=&\frac{(k+r)(k+s)+\frac{(\sum_{i=1}^{r-1}n_{i})^2+\sum_{i\neq j}(n_i-n_j)^2}{r-1}}{(k+r)(s+1)k}.\nonumber
	\end{eqnarray}
	Note that $\sum_{i\neq j}(n_i-n_j)^2=t(r-1-t)$ by Eq. \eqref{eq2}, where $t=s(k+1)-\left \lfloor \frac{s(k+1)}{r-1} \right \rfloor(r-1)$. According to Eq. \eqref{eq7}, we have
	\begin{eqnarray}
			\gamma^{all}_0&=&\frac{(k+r)(k+s)+\frac{(\sum_{i=1}^{r-1}n_{i})^2+t(r-1-t)}{r-1}}{(k+r)(s+1)k}\nonumber\\
			&=&\frac{k+s}{(s+1)k}+\frac{s^2(k+r)}{(r-1)(s+1)k}+\frac{t(r-1-t)}{(k+r)(s+1)(r-1)k}.\nonumber
	\end{eqnarray}
By the mean inequality, we have $0\leq t(r-1-t)\leq\frac{(r-1)^2}{4}$ and we can
further obtain that
	\begin{eqnarray}
		&&\frac{k+s}{(s+1)k}+\frac{s^2(k+r)}{(r-1)(s+1)k}\leq\gamma^{all}_{0}\nonumber\\ &&\leq\frac{k+s}{(s+1)k}+\frac{s^2(k+r)}{(r-1)(s+1)k}+\frac{r-1}{4k(k+r)(s+1)}.\nonumber
	\end{eqnarray}
\end{proof}

According to Theorem \ref{th2}, the difference between the lower bound and the upper bound
of the average repair bandwidth ratio satisfies that
	\begin{eqnarray} |\gamma^{all}_{0}-\gamma^{all}_{0,min}|&\leq&|\gamma^{all}_{0,min}-\gamma^{all}_{0,max}|=
\frac{r-1}{4k(k+r)(s+1)}\leq\frac{r-1}{8k(k+r)}.\label{eq10}
\end{eqnarray}
When $r\ll k$, the difference between $\gamma^{all}_{0}$ and $\gamma^{all}_{0,min}$
can be ignored. When $r\ll k$, we present the repair bandwidth for $\mathcal{C}(n,k,s,k'=k)$
as follows.

\begin{corollary}
	\label{col3}
Let $r\ll k$ and $k\rightarrow +\infty$, then the minimum value of the average repair bandwidth ratio $\gamma^{all}_{0}$
of $\mathcal{C}(n,k,s,k'=k)$ is achieved when $s=\sqrt{r}-1$.
\end{corollary}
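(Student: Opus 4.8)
The plan is to strip off everything that depends on $k$ and reduce the statement to a one–variable optimisation over $s$. By Theorem~\ref{th2} together with the estimate \eqref{eq10}, $\gamma^{all}_{0}$ and $\gamma^{all}_{0,min}$ differ by at most $\frac{r-1}{4k(k+r)(s+1)}$, which for fixed $r$ tends to $0$ as $k\to+\infty$; since the admissible values of $s$ form the fixed finite set $\{1,2,\ldots,r-2\}$, this convergence is uniform in $s$, so it suffices to analyse $\gamma^{all}_{0,min}=\frac{k+s}{(s+1)k}+\frac{s^2(k+r)}{(r-1)(s+1)k}$. Writing $\frac{k+s}{(s+1)k}=\frac{1}{s+1}+\frac{s}{(s+1)k}$ and $\frac{s^2(k+r)}{(r-1)(s+1)k}=\frac{s^2}{(r-1)(s+1)}\left(1+\frac{r}{k}\right)$ and letting $k\to+\infty$ with $r\ll k$, every $k$-dependent remainder vanishes and
\[
\lim_{k\to+\infty}\gamma^{all}_{0}=\frac{1}{s+1}+\frac{s^2}{(r-1)(s+1)}=\frac{s^2+r-1}{(r-1)(s+1)},
\]
a function of $s$ alone that I denote by $g(s)$.

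It remains to minimise $g(s)$. Treating $s$ as a real variable on the interval $[1,r-2]$, a direct differentiation gives $g'(s)=\frac{(s+1)^2-r}{(r-1)(s+1)^2}$, so $g'(s)<0$ for $s+1<\sqrt{r}$ and $g'(s)>0$ for $s+1>\sqrt{r}$: thus $g$ is strictly decreasing and then strictly increasing, hence unimodal, with its unique minimiser at $(s+1)^2=r$, i.e. $s=\sqrt{r}-1$ (the corresponding minimum value being $\frac{2}{\sqrt{r}+1}$). I would then verify that this critical point is feasible, namely $1\le\sqrt{r}-1\le r-2$, which holds for all moderately large $r$ and in particular throughout the regime $r\ge 8$ of interest, so the constrained minimiser coincides with the unconstrained one; and when $\sqrt{r}-1$ is not an integer, unimodality forces the minimiser over integer $s$ to be one of the two integers nearest $\sqrt{r}-1$, while for $k$ large enough the minimiser of the exact quantity $\gamma^{all}_{0}$ over $s\in\{1,\ldots,r-2\}$ coincides with that of $g$. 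This yields the claimed optimal choice $s=\sqrt{r}-1$.

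Apart from the elementary calculus, the only step requiring care is the passage to the limit: one must ensure that the $O\!\left(\frac{r-1}{k(k+r)}\right)$ gap from \eqref{eq10} and the lower–order terms $\frac{s}{(s+1)k}$ and $\frac{s^2 r}{(r-1)(s+1)k}$ can all be discarded \emph{uniformly} in $s$. This is immediate because, with $r$ fixed, $s$ ranges over a finite set, so pointwise convergence for each $s$ suffices — and this is exactly where the hypothesis $r\ll k$, $k\to+\infty$ is used. I do not expect any genuine obstacle beyond this bookkeeping.
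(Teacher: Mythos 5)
Your proposal is correct and follows essentially the same route as the paper: use the bound in Eq.~\eqref{eq10} to pass from $\gamma^{all}_{0}$ to the limit $\frac{1}{s+1}+\frac{s^2}{(r-1)(s+1)}$ of $\gamma^{all}_{0,min}$, then differentiate in $s$ to locate the unique minimiser at $(s+1)^2=r$, i.e.\ $s=\sqrt{r}-1$. Your added remarks on uniformity over the finite range of $s$ and on rounding to an integer $s$ are sound refinements of points the paper treats only informally (the integer case in the remark following the corollary), but they do not change the argument.
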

\begin{proof}
When $r\ll k$ and $k\rightarrow +\infty$, we have that $\underset{k\rightarrow +\infty}{lim}\gamma_{0}^{all}
=\underset{k\rightarrow +\infty}{lim}\gamma_{0,min}^{all}$ by Eq. \eqref{eq10}
and by Eq. \eqref{eq8}, we can further obtain that
	\begin{eqnarray}
\underset{k\rightarrow +\infty}{lim}\gamma_{0}^{all}
		=\underset{k\rightarrow +\infty}{lim}\gamma_{0,min}^{all}
		=\frac{s^2}{(r-1)(s+1)}+\frac{1}{s+1}.\label{eq11}
	\end{eqnarray}
We can compute that
	\begin{eqnarray}
		&&\frac{\partial\underset{k\rightarrow +\infty}{lim}\gamma_{0}^{all}}{\partial s}=\frac{(s+1)^2-r}{(s+1)^2(r-1)}.\nonumber
	\end{eqnarray}
If $s>\sqrt{r}-1$, then $\frac{\partial\gamma(r,s)}{\partial s}>0$;
if $s<\sqrt{r}-1$, then $\frac{\partial\gamma(r,s)}{\partial s}<0$;
if $s=\sqrt{r}-1$, then $\frac{\partial\gamma(r,s)}{\partial s}=0$.
Therefore, when $s=\sqrt{r}-1$, $\underset{k\rightarrow +\infty}{lim}\gamma_{0}^{all}$
achieves the minimum value.
\end{proof}

Note that $s$ should be a positive integer, we can let $s=\left \lfloor \sqrt{r}-1 \right \rfloor$
or $s=\left \lceil \sqrt{r}-1 \right \rceil$, and then choose the minimum value
of the average repair bandwidth ratio $\gamma^{all}_{0}$
for $\mathcal{C}(n,k,s,k'=k)$.

\begin{figure}
	\centering
	\includegraphics[width=0.55\linewidth]{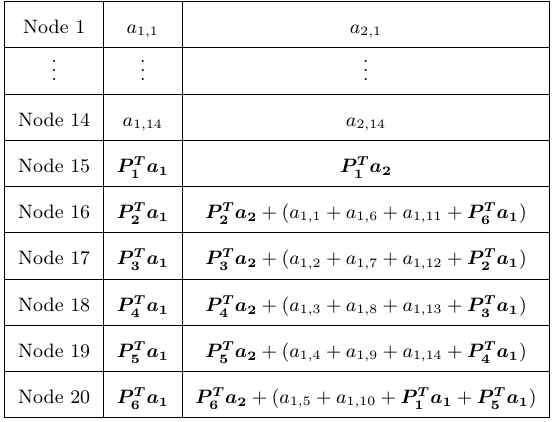}
	\caption{An example of code $\mathcal{C}(n,k,s,k'=k)$, where $(n,k,s)=(20,14,1)$.}
	\label{fig.4}
\end{figure}

Consider a specific example of code $\mathcal{C} (n=20,k=14,s=1,k'=14)$ in Fig. \ref{fig.4}.
Suppose that node 1 fails, we can first download 14 symbols
$a_{2,2},a_{2,3},\ldots,a_{2,14},\mathbf{P}_{1}^T\mathbf{a_2}$ to obtain the two symbols
$a_{2,1},\boldsymbol{P_2^Ta_2}$, according to the MDS property. Then, we
download the following 4 symbols
\[
\boldsymbol{P_2^Ta_2}+(a_{1,1}+a_{1,6}+a_{1,11}+\boldsymbol{P_6^Ta_1}),
a_{1,6},a_{1,11},\mathbf{P}_{6}^T\mathbf{a}_1
\]
to recover $a_{1,1}$. The repair bandwidth of node 1 is 18 symbols.
Similarly, we can show that the repair bandwidth of any single-node failure
among nodes 2 to 15 is 18 symbols.

Suppose that node 16 fails, we can download the 14 symbols
$a_{2,1},a_{2,2},\ldots,a_{2,14}$ to obtain $\mathbf{P}_{2}^T\mathbf{a_{2}},\mathbf{P}_{3}^T\mathbf{a_2}$,
according to the MDS poverty. Then, we download the 4 symbols
$\boldsymbol{P_3^Ta_2}+(a_{1,2}+a_{1,7}+a_{1,12}+\boldsymbol{P_2^Ta_1}),a_{1,2},a_{1,7},a_{1,12}$ to recover $\mathbf{P}_{2}^T\mathbf{a_1}$.
Finally, we download the 4 symbols
$a_{1,1},a_{1,6},a_{1,11},\boldsymbol{P_6^Ta_1}$
to recover $\mathbf{P}_{2}^T\mathbf{a}_2+p_1$.
The repair bandwidth of node 16 is 22 symbols.
Similarly, we can show that the repair bandwidth of any single-node failure
among nodes 17 to 20 is 22 symbols. Therefore, we know that in this example, the average repair bandwidth ratio of all nodes is $\frac{15*18+5*22}{20*28}\approx 0.68$.

\section{Piggybacking Codes $\mathcal{C}(n,k,s,k'=0)$}
\label{sec:4}
In this section, we consider the special case of codes $\mathcal{C}(n,k,s,k'=0)$
with $n\geq s+1$ based on the second piggybacking design. Recall that
there is no data symbol in the last column and we add the $n$ piggyback
functions in the last column. Here, for $i\in\{1,2,\dots,n\}$, we use $p_i$ to represent the piggyback function in the last column in row (node) $i$. Fig. \ref{fig.5} shows the structure of codes $\mathcal{C}(n,k,s,k'=0)$.

\begin{figure}[htpb]
	\centering
	\includegraphics[width=0.55\linewidth]{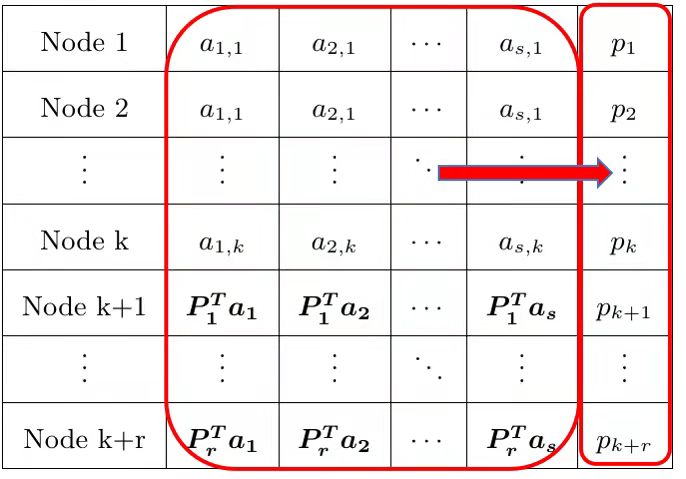}
	\caption{The structure of codes $\mathcal{C}(n,k,s,k'=0)$.}
	\label{fig.5}
\end{figure}

For notational convenience, we denote the parity symbol
$\mathbf{P}_j^T\mathbf{a}_i$ by $a_{i,k+j}$ in the following,
where $1\leq j\leq r, 1\leq i\leq s$.
Given an integer $x$ with $-s+1\leq x\leq k+r+s$, we define $\overline{x}$ by
\[
\overline{x}=\left\{\begin{matrix}
	x+k+r, \text{ if }\ -s+1\leq x\leq0\\
	x, \text{ if }\ 1\leq x\leq k+r\\
	x-k-r, \text{ if }\ k+r+1\leq x\leq k+r+s
\end{matrix}\right..
\]
According to the design of piggyback functions in Section \ref{sec:2.1},
{the symbol $a_{i,j}$ is used to compute the piggyback
function $p_{\overline{i+j}}$ for $i\in\{1,2,\ldots,s\}$ and $j\in\{1,2,\ldots,n\}$.
Therefore, we can obtain that the piggyback function $p_j=\sum_{i=1}^{s}a_{i,\overline{j-i}}$
for $1\leq j\leq k+r$. For any $j\in\{1,2,\ldots,k+r\}$ and $i_1\neq i_2\in\{1,2,\ldots,s\}$,
the two symbols in row $j$, columns $i_1$ and $i_2$ are added to two different
piggyback functions $p_{\overline{i_1+j}}$ and $p_{\overline{i_2+j}}$, respectively,
since $\overline{i_1+j}\neq \overline{i_2+j}$ for $n\geq s+1$.} The next theorem shows the repair bandwidth of codes $\mathcal{C}(n,k,s,k'=0)$.

\begin{theorem}
	\label{th4}
The repair bandwidth of codes $\mathcal{C}(n,k,s,k'=0)$ is $s+s^2$.
\end{theorem}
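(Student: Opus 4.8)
The plan is to exhibit an explicit single-node repair scheme for an arbitrary failed node $f\in\{1,2,\ldots,n\}$ that downloads exactly $s+s^2$ symbols, and then to check that all of these downloads are distinct so the count is tight. I will use the facts established just before the statement: $p_j=\sum_{i=1}^{s}a_{i,\overline{j-i}}$ for $1\le j\le n$; node $f$ holds the $s+1$ symbols $a_{1,f},\ldots,a_{s,f},p_f$; no symbol of row $f$ occurs in $p_f$; and for $i_1\ne i_2\in\{1,\ldots,s\}$ the symbols $a_{i_1,f},a_{i_2,f}$ lie in the two \emph{distinct} piggyback functions $p_{\overline{i_1+f}},p_{\overline{i_2+f}}$.

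First I would recover $p_f$ directly from its definition by downloading the $s$ symbols $a_{i,\overline{f-i}}$, $i=1,\ldots,s$; since $1\le i\le s<n$ each lies in a row $\overline{f-i}\ne f$, hence on a surviving node, so this stage costs $s$ symbols. Next, for each $i\in\{1,\ldots,s\}$ I would recover $a_{i,f}$ from the unique piggyback function containing it, $p_{\overline{i+f}}=\sum_{i'=1}^{s}a_{i',\overline{i+f-i'}}$: its $i'=i$ term is $a_{i,\overline{f}}=a_{i,f}$, while the other $s-1$ terms lie in rows $\overline{i+f-i'}\ne f$ (since $i'\ne i$ and $|i-i'|<n$). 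Downloading $p_{\overline{i+f}}$ from the surviving node $\overline{i+f}\ne f$ together with those $s-1$ data symbols and subtracting recovers $a_{i,f}$, at a cost of $s$ symbols per $i$, hence $s^2$ in all; the two stages download $s+s^2$ symbols.

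The main step is to verify that no symbol is requested twice. I would check: (i) the piggyback symbols $p_{\overline{i+f}}$, $i=1,\ldots,s$, are pairwise distinct, since $\overline{1+f},\ldots,\overline{s+f}$ are $s<n$ consecutive residues mod $n$; (ii) indexing the data downloads by $i\in\{0,1,\ldots,s\}$, where $i=0$ denotes the $p_f$ stage and $i\ge1$ the repair of $a_{i,f}$, every such download has the form $a_{i',\overline{i+f-i'}}$ with $i'\in\{1,\ldots,s\}$, $i'\ne i$, and two of them, $a_{i',\overline{i+f-i'}}$ and $a_{j',\overline{j+f-j'}}$, coincide only if $i'=j'$ and $i\equiv j\pmod n$, hence $i=j$, i.e. only if they are literally the same download; and (iii) a data symbol (columns $1,\ldots,s$) is never a piggyback symbol (column $s+1$). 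Together these show that the $s+s^2$ downloaded symbols are distinct. I expect the one delicate point to be the bookkeeping with the wrap-around map $\overline{\cdot}$ — in particular the identity $\overline{\,\overline{i+f}-i'\,}=\overline{i+f-i'}$ used to expand $p_{\overline{i+f}}$, valid because $\overline{i+f}\in\{1,\ldots,n\}$ and $i'\le s$ keep the argument inside the range $\{-s+1,\ldots,k+r+s\}$ on which $\overline{\cdot}$ is defined, together with the analogous checks that the rows $\overline{i+f-i'}$ and $\overline{f-i}$ differ from $f$ — but once the cases in the definition of $\overline{\cdot}$ are written out these are routine, and notably the scheme uses no MDS decoding, only additions and subtractions of piggyback functions.
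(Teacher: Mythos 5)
Your proposal is correct and follows essentially the same route as the paper's proof: first rebuild $p_f$ from its $s$ summands, then recover each $a_{i,f}$ from $p_{\overline{i+f}}$ plus its other $s-1$ summands, for a total of $s+s^2$ symbols. The extra bookkeeping you add (distinctness of the downloads and the wrap-around checks) only makes explicit what the paper leaves implicit via its earlier observations on the piggyback structure.
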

\begin{proof}
Suppose that node $f$ fails, where $f\in \{1,2,\ldots,n\}$. {Similar to}
the repair method in Section \ref{sec:2.2}, we can first repair the piggyback function
$p_f$ (the symbol in row $f$ and column $s+1$) by downloading $s$ symbols
$\{a_{j,\overline{f-j}}\}_{j=1}^s$ to repair the symbol $p_f$. Note that the symbol
$a_{j,f}$ is used to compute the piggyback function $p_{\overline{j+f}}$
for $j\in\{1,2,\ldots,s\}$, we can recover the symbol $a_{j,f}$ by downloading
$p_{\overline{j+f}}$ and the other $s-1$ symbols used in computing
$p_{\overline{j+f}}$ except $a_{j,f}$. Therefore, the repair bandwidth of node
$f$ is $s+s^2$ symbols.
\end{proof}

By Theorem \ref{th4}, the average repair bandwidth ratio of $\mathcal{C}(n,k,s,k'=0)$
is $\frac{s+1}{k}$. The storage overhead of  $\mathcal{C}(n,k,s,k'=0)$ is $$\frac{(s+1)(k+r)}{sk}=(1+\frac{1}{s})\cdot\frac{k+r}{k}.$$
We show in the next theorem that our codes $\mathcal{C}(n,k,s,k'=0)$ can
recover any $r+1$ failures under some condition.

\begin{theorem}
	\label{th5}
If $k>(s-1)(r+1)+1$, then the codes $\mathcal{C}(n,k,s,k'=0)$ can
recover any $r+1$ failures.
\end{theorem}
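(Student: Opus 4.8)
The plan is to fix an arbitrary set $F\subseteq\{1,2,\ldots,n\}$ of $r+1$ failed nodes (if fewer than $r+1$ nodes fail the argument only gets easier) and reconstruct all $n(s+1)$ symbols from the $k-1=n-(r+1)$ surviving nodes. These surviving nodes supply $a_{i,j}$ for every $i\in\{1,\ldots,s\}$ and every $j\notin F$ — recall $a_{i,k+j}$ abbreviates $\mathbf{P}_j^T\mathbf{a}_i$ — together with $p_j=\sum_{i=1}^{s}a_{i,\overline{j-i}}$ for every $j\notin F$. The strategy is to use the surviving piggyback symbols to hand each of the $s$ instances one codeword coordinate beyond the $k-1$ it already has at the surviving positions, and then to finish each instance by the MDS property of the $(n,k)$ code.

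The combinatorial core I would establish first: the hypothesis $k>(s-1)(r+1)+1$ is equivalent to $k-1>(s-1)(r+1)$, and the $k-1$ indices of $\{1,\ldots,n\}\setminus F$, viewed cyclically on $\mathbb{Z}_n$, form at most $|F|=r+1$ maximal runs of consecutive indices; hence at least one run has length at least $s$, for otherwise $|\{1,\ldots,n\}\setminus F|\le(s-1)(r+1)<k-1$. Let $f^{\star}\in F$ be the element immediately preceding such a run, so that $\overline{f^{\star}+1},\overline{f^{\star}+2},\ldots,\overline{f^{\star}+s}$ all lie outside $F$.

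Next I would run the reconstruction instance by instance in the order $t=s,s-1,\ldots,1$, maintaining the invariant that instances $t+1,\ldots,s$ are already fully decoded. At step $t$, put $j^{\star}=\overline{f^{\star}+t}$, which is outside $F$ because $1\le t\le s$, and expand the surviving symbol $p_{j^{\star}}=\sum_{i=1}^{s}a_{i,\overline{j^{\star}-i}}$. For $i>t$ the summand is known by the invariant; for $i<t$ we have $\overline{j^{\star}-i}=\overline{f^{\star}+(t-i)}$ with $1\le t-i\le s-1$, so $\overline{j^{\star}-i}\notin F$ and the summand is a surviving symbol; and for $i=t$ the summand is $a_{t,f^{\star}}$, which lies at the failed position $f^{\star}$. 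Since every piggyback function is a plain sum, the coefficient of $a_{t,f^{\star}}$ is $1$, so $a_{t,f^{\star}}=p_{j^{\star}}-\sum_{i\ne t}a_{i,\overline{j^{\star}-i}}$. Instance $t$ now has $k$ known coordinates — the $k-1$ positions outside $F$ together with $f^{\star}$ — and is therefore fully decoded by the MDS property, restoring the invariant with $t$ replaced by $t-1$. When the loop terminates all $s$ instances are known, so each remaining symbol $p_f$ with $f\in F$ is recovered as $\sum_{i=1}^{s}a_{i,\overline{f-i}}$, and the reconstruction is complete.

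The only step with real content is the run-length lemma, and it is immediate from the hypothesis; everything else is index bookkeeping modulo $n$ through the $\overline{\,\cdot\,}$ notation. The one point to watch is the descending order $t=s,s-1,\ldots,1$: it is precisely what forces the offsets $t-i$ appearing for $i<t$ to lie in $\{1,\ldots,s-1\}$, so that the remaining summands $a_{i,\overline{f^{\star}+(t-i)}}$ stay inside the safe run $\overline{f^{\star}+1},\ldots,\overline{f^{\star}+s}$ and hence are surviving symbols. I do not expect any essential obstacle beyond pinning down these index ranges.
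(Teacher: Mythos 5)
Your proposal is correct and follows essentially the same route as the paper's own proof: a pigeonhole argument on the cyclic gaps between the $r+1$ failed nodes (using $k-1>(s-1)(r+1)$) to find a failed node followed by at least $s$ consecutive surviving nodes, then recovering that node's symbol in columns $s,s-1,\ldots,1$ from the surviving piggybacks in that run together with already-decoded columns, finishing each column by the MDS property and finally recomputing the failed piggyback entries. Your invariant-based phrasing of the descending loop is just a cleaner write-up of the paper's step-by-step recovery of $a_{s-\ell,f_j}$.
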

\begin{proof}
Suppose that $r+1$ nodes $f_1,f_2,\ldots,f_{r+1}$ fail, where
$1\leq f_1<f_2<\cdots<f_{r+1}\leq k+r$. In the following, we present a repair
method to recover the failed $r+1$ nodes.

For $i\in\{1,2,\ldots,r\}$, let $t_i$ be the number of surviving nodes between
two failed nodes $f_i$ and $f_{i+1}$, and $t_{r+1}$ be the number of surviving
nodes between nodes $f_{r+1}$ and $k+r$ plus the number of surviving nodes between nodes
1 and $f_1$, i.e., $t_i=f_{i+1}-f_i-1$ and
$t_{r+1}=k+r-f_{r+1}+f_1-1$. It is easy to see that
\begin{eqnarray}
&&\sum_{i=1}^{r+1}t_i=k-1.\nonumber
\end{eqnarray}
Let $t_{\max}=\max\{t_1,t_{2},\ldots,t_{r+1}\}$ and without loss of generality,
we assume that $t_{\max}=t_j$ with $j\in\{1,2,\ldots,r+1\}$. We have $t_j=t_{\max}\geq t_i$
for $i=1,2,\ldots,r+1$ and
\begin{eqnarray}
&&(r+1)t_j\geq\sum_{i=1}^{r+1}t_i=k-1>(s-1)(r+1),\nonumber
\end{eqnarray}
where the last inequality comes from that $k>(s-1)(r+1)+1$.
Therefore, we obtain that $t_j\geq s$.

We are now ready to describe the repair method for the failed $r+1$ nodes.
We can first repair one symbol stored in the failed node $f_j$, some symbols
that are used to repair the symbol in node $f_j$ are shown in Fig. \ref{fig.6}.
Recall that the symbol $p_{\overline{f_j+s}}$ is located column $s+1$
in node $\overline{f_j+s}$. We claim that node $\overline{f_j+s}$ is not
a failed node, since $t_j\geq s$. Recall also that $p_{\overline{f_j+s}}=a_{s,f_j}+\sum_{i=1}^{s-1}a_{i,\overline{f_j+s-i}}$.
We claim that node $\overline{f_j+s-i}$ is not a failed node, for all
$1\leq i\leq s-1$, since $t_{j}\geq s$.
Therefore, we can download the $s$ symbols $p_{\overline{f_j+s}},
\{a_{i,\overline{f_j+s-i}}\}_{i=1}^{s-1}$ to recover the symbol
$a_{s,f_j}$.
Once the symbol $a_{s,f_j}$ is recovered, we can recover the other
failed $r$ symbols $\{a_{s,f_i}\}_{i=1,2,\ldots,j-1,j+1,\ldots,r+1}$ in column $s$
in nodes $f_1,\ldots,f_{j-1},f_{j+1},\ldots,f_{r+1}$, according to the MDS property
of the MDS codes.

\begin{figure}[htpb]
	\centering
	\includegraphics[width=0.7\linewidth]{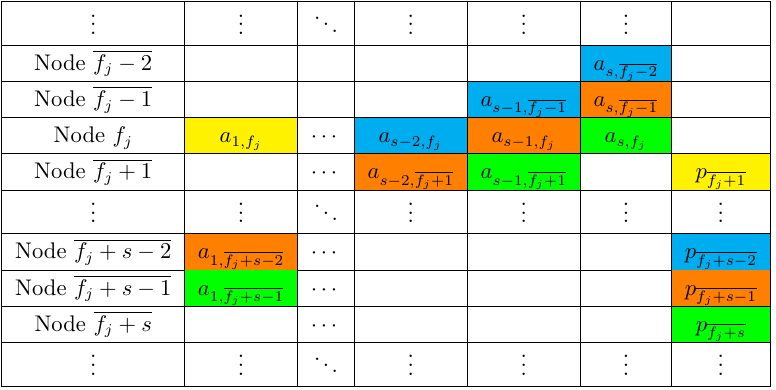}
	\caption{Piggyback functions of the codes $\mathcal{C}(n,k,s,k'=0)$,
where one piggyback function and the corresponding symbols that are used to compute
the piggyback function are with the same color.}
	\label{fig.6}
\end{figure}

Next, we present a repair method to recover the symbol $a_{s-1,f_j}$.
First, recall that
\[
p_{\overline{f_j+s-1}}=a_{s-1,f_j}+a_{s,\overline{f_j-1}}+\sum_{i=1}^{s-2}a_{i,\overline{f_j+s-1-i}},
\]
node $\overline{f_j+s-1-i}$ is not a failed node for all $1\leq i\leq s-2$
and the symbol $a_{s,\overline{f_j-1}}$ has already recovered. Therefore, we can
recover $a_{s-1,f_j}$ by downloading the following $s$ symbols
\[
p_{\overline{f_j+s-1}}, a_{s,\overline{f_j-1}}, \{a_{i,\overline{f_j+s-1-i}}\}_{i=1}^{s-2}.
\]
Once the symbol $a_{s-1,f_j}$ is recovered, we can recover the other
failed $r$ symbols $\{a_{s-1,f_i}\}_{i=1,2,\ldots,j-1,j+1,\ldots,r+1}$ in column $s-1$
in nodes $f_1,\ldots,f_{j-1},f_{j+1},\ldots,f_{r+1}$, according to the MDS property
of the MDS codes.

Similarly, we can recover $a_{s-\ell,f_j}$ by downloading the following $s$ symbols 
\[
p_{\overline{f_j+s-\ell}}, \{a_{s-l+i,\overline{f_j-i}}\}_{i=1}^{l}, \{a_{s-l-i,\overline{f_j+i}}\}_{i=1}^{s-l-1},
\]
and further recover the other
failed $r$ symbols $\{a_{s-\ell,f_i}\}_{i=1,2,\ldots,j-1,j+1,\ldots,r+1}$ in column $s-\ell$
in nodes $f_1,\ldots,f_{j-1},f_{j+1},\ldots,f_{r+1}$, according to the MDS property
of the MDS codes, where $\ell=1,2,\ldots,s-1$.

We have recovered $(r+1)s$ symbols stored in the $r+1$ failed nodes and can recover
all the other $r+1$ symbols in the $r+1$ failed nodes in the last column.
\end{proof}

Recall that an $(n,k,g)$ Azure-LRC code \cite{huang2012} first computes
$n-k-g$ global parity symbols
by encoding all the $k$ data symbols, then divides the $k$ data symbols
into $g$ groups each with
$\frac{k}{g}$ symbols (suppose that $k$ is a multiple of $g$) and compute one local parity symbol for each group.
Each of the obtained $n$ symbols is stored in one node and an $(n,k,g)$ Azure-LRC code can tolerant any $n-k-g+1$-node failures.
In the following theorem, we will show that the proposed  codes
$\mathcal{C}(n-g,k,s,k'=0)$ have lower repair bandwidth than the
$(n,k,g)$ Azure-LRC code \cite{huang2012} under the condition that both the fault-tolerance and
the storage overhead of two codes are the same.

\begin{theorem}
	\label{th6}
If $2g>n-k+1$ and $n^2-k^2<kg\cdot (n-k-g+1)$, then the proposed codes $\mathcal{C}(n-g,k,\frac{n-g}{g},k'=0)$ (suppose that $\frac{n-g}{g}$ is an integer)
have strictly less repair bandwidth
than that of $(n,k,g)$ Azure-LRC code, under the condition that both the fault-tolerance and
the storage overhead of two codes are the same.
\end{theorem}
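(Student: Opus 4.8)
The plan is to compare $\mathcal{C}(n-g,k,s,k'=0)$ with the $(n,k,g)$ Azure-LRC along three axes — storage overhead, fault tolerance, and average repair bandwidth ratio — and to show that the choice $s=\frac{n-g}{g}$ forces the first two to coincide while the second hypothesis $n^2-k^2<kg(n-k-g+1)$ is exactly what makes the third strict. Throughout write $s=\frac{n-g}{g}$, so that $s+1=\frac{n}{g}$, and write $r_0=(n-g)-k=n-k-g$ for the number of parity columns in each of the $s$ MDS instances of $\mathcal{C}(n-g,k,s,k'=0)$, so that the fault tolerance granted by Theorem~\ref{th5} is $r_0+1=n-k-g+1$, the same as that of the $(n,k,g)$ Azure-LRC.

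For storage overhead, I would just quote the formula recorded right after Theorem~\ref{th4}: $\mathcal{C}(n-g,k,s,k'=0)$ has overhead $(1+\tfrac1s)\cdot\frac{n-g}{k}=\frac{n}{n-g}\cdot\frac{n-g}{k}=\frac nk$, matching the Azure-LRC overhead $\frac nk$. For fault tolerance, it suffices to check the hypothesis of Theorem~\ref{th5} for $\mathcal{C}(n-g,k,s,k'=0)$, namely $k>(s-1)(r_0+1)+1$; since $s-1=\frac{n-2g}{g}$, this is equivalent to $g(k-1)>(n-2g)(n-k-g+1)$. Here the first assumption $2g>n-k+1$ is used twice: it gives $r_0+1=n-k-g+1<g$, and also $2g>n-k$, hence $n-2g\le k-1$. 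If $n-2g\le0$ the right-hand side is non-positive while $g(k-1)>0$; if $n-2g>0$ then $(n-2g)(n-k-g+1)<(n-2g)g\le(k-1)g$. Either way the hypothesis holds, so by Theorem~\ref{th5} the piggyback code also tolerates any $n-k-g+1$ failures.

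For the repair bandwidth, I would first recall from Theorem~\ref{th4} and the remark after it that the average repair bandwidth ratio of $\mathcal{C}(n-g,k,s,k'=0)$ is $\frac{s+1}{k}=\frac{n}{gk}$. Then I would compute the same quantity for the $(n,k,g)$ Azure-LRC under the standard repair model: each of the $k$ data nodes and each of the $g$ local-parity nodes is repaired by downloading the other $\frac kg$ symbols of its group, and each of the $n-k-g$ global-parity nodes is repaired by downloading $k$ symbols (say the $k$ data symbols), so the total repair bandwidth over all $n$ nodes is $(k+g)\cdot\frac kg+(n-k-g)\cdot k$, giving average repair bandwidth ratio $\frac{1}{nk}\big(\frac{k^2}{g}+k(n-k-g+1)\big)=\frac{k+g(n-k-g+1)}{gn}$. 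Multiplying through by $gnk>0$, the desired strict inequality $\frac{n}{gk}<\frac{k+g(n-k-g+1)}{gn}$ is equivalent to $n^2<k^2+kg(n-k-g+1)$, i.e. to $n^2-k^2<kg(n-k-g+1)$, which is precisely the second assumption; this finishes the proof.

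The routine parts are the storage-overhead identity and the final algebraic reduction. The step that needs care is the fault-tolerance matching: the hypothesis of Theorem~\ref{th5} is not literally one of the two displayed assumptions, so it must be extracted by the short sign-of-$(n-2g)$ case analysis above, which is where the first assumption $2g>n-k+1$ is really used. A secondary point to state explicitly is the repair model assumed for the Azure-LRC global-parity nodes (download $k$ symbols, and hence average over all $n$ nodes), since under a different convention the comparison would change; this convention is the one consistent with the average repair bandwidth ratio used for the piggyback codes.
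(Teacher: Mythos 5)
Your proposal is correct and follows essentially the same route as the paper: match the storage overhead via $s=\frac{n-g}{g}$, use $2g>n-k+1$ to verify the hypothesis of Theorem~\ref{th5} so both codes tolerate $n-k-g+1$ failures, and reduce the strict comparison of $\frac{s+1}{k}$ with the Azure-LRC ratio $\frac{k+g(n-k-g+1)}{gn}$ exactly to the assumption $n^2-k^2<kg(n-k-g+1)$. Your only deviation is the explicit sign-of-$(n-2g)$ case analysis in the fault-tolerance step, which is a slightly more careful version of the paper's inequality chain (the paper's strict step implicitly assumes $n>2g$), not a different argument.
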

\begin{proof}
Recall that the storage overhead of $(n,k,g)$ Azure-LRC code is $\frac{n}{k}$ and
the fault-tolerance of $(n,k,g)$ Azure-LRC code is $n-k-g+1$. We should determine the parameter
$s$ for $\mathcal{C}(n-g,k,s,k'=0)$ such that the storage overhead
is $\frac{n}{k}$ and the fault-tolerance is $n-k-g+1$.

When $s=\frac{n-g}{g}$, we have that the storage overhead of
code $\mathcal{C}(n-g,k,s,k'=0)$ is
\begin{eqnarray}
	\frac{(s+1)\cdot(k+r)}{sk}=\frac{(\frac{n-g}{g}+1)\cdot(n-g)}{(\frac{n-g}{g})\cdot k}=\frac{n}{k},\nonumber
\end{eqnarray}
which is equal to the storage overhead of $(n,k,g)$ Azure-LRC code. By assumption, we have that $2g>n-k+1$, then we can obtain that
\begin{align*}
(s-1)(r+1)+1=(\frac{n-g}{g}-1)(n-k-g+1)+1<(\frac{n-g}{g}-1) g+1=n-2g+1<k.
\end{align*}
Therefore, the condition in Theorem \ref{th5} is satisfied, the fault-tolerant of our
code $\mathcal{C}(n-g,k,\frac{n-g}{g},k'=0)$ is $r+1=n-k-g+1$ and the code length of $\mathcal{C}(n-g,k,\frac{n-g}{g},k'=0)$
is $n-g$.
In the following,
we show that the average repair bandwidth ratio of all nodes of our
$\mathcal{C}(n-g,k,\frac{n-g}{g},k'=0)$ is strictly less than that of $(n,k,g)$ Azure-LRC code.

Let the average repair bandwidth ratio of all nodes of $(n,k,g)$ Azure-LRC code
and $\mathcal{C}(n-g,k,\frac{n-g}{g},k'=0)$ be $\gamma_{1}$ and $\gamma_{2}$, respectively.
In $(n,k,g)$ Azure-LRC code, we can repair any symbol in a group by downloading the other
$\frac{k}{g}$ symbols in the group and repair any global parity symbol by downloading
the $k$ data symbols. The average repair bandwidth ratio of all nodes of $(n,k,g)$ Azure-LRC code is
	\begin{eqnarray}
	&&\gamma_{1}=\frac{(k+g)\cdot \frac{k}{g}+(n-k-g)\cdot k}{nk}=\frac{(n-k-g+1)\cdot g+k}{ng}.\nonumber
	\end{eqnarray}
According to Theorem \ref{th4}, the average repair bandwidth ratio of all nodes of
$\mathcal{C}(n-g,k,\frac{n-g}{g},k'=0)$ is $\frac{s+1}{k}$. We have that
	\begin{eqnarray}
	&&\gamma_{2}<\gamma_{1}\Leftrightarrow\frac{s+1}{k}<\frac{(n-k-g+1)\cdot g+k}{ng}\nonumber\\
	&&\Leftrightarrow\frac{\frac{n-g}{g}+1}{k}<\frac{(n-k-g+1)\cdot g+k}{ng}\nonumber\\
	&&\Leftrightarrow n^2-k^2<kg\cdot (n-k-g+1).\nonumber
	\end{eqnarray}
By the assumption, we have that $\gamma_{2}<\gamma_{1}$.
\end{proof}

By Theorem \ref{th6}, the storage overhead of $\mathcal{C}(n-g,k,\frac{n-g}{g},k'=0)$
and $(n,k,g)$ Azure-LRC code are the same and $\mathcal{C}(n-g,k,\frac{n-g}{g},k'=0)$
have strictly less repair bandwidth than that of $(n,k,g)$ Azure-LRC code, when the condition
is satisfied. Since the storage overhead and the repair bandwidth of $\mathcal{C}(n,k+g,\frac{n}{g},k'=0)$
are strictly less than that of $\mathcal{C}(n-g,k,\frac{n-g}{g},k'=0)$.
We thus obtain that the proposed codes $\mathcal{C}(n,k+g,\frac{n}{g},k'=0)$ have better
performance than that of $(n,k,g)$ Azure-LRC code in terms of both storage overhead
and repair bandwidth, when $2g>n-k+1$ and $n^2-k^2<kg\cdot (n-k-g+1)$.

Note that, with $\mathcal{C}(n,k+g,\frac{n}{g},k'=0)$, one can repair
any single-node failure by accessing $\frac{2n}{g}$ helper nodes; however,  one  only need to
access $\frac{k}{g}$ helper nodes in repairing any symbol in a group and access $k$ helper
nodes in repairing any global parity symbol with $(n,k,g)$ Azure-LRC code.

Recall that an $(n,k,g)$ optimal-LRC first encodes $k$ data symbols to obtain
$n-k-g$ global parity symbols, then divides the $n-g$ symbols (including $k$
data symbols and $n-k-g$ global parity symbols) into $g$ groups each with
$(n-g)/g$ symbols and encodes one local parity symbol for each group, where
$n-g$ is a multiple of $g$. Each of the obtained $n$ symbols is stored in one node and an $(n,k,g)$ optimal-LRC code can tolerant any $n-k-g+1$ node failures. The repair bandwidth of any single-node failure of
optimal-LRC is $(n-g)/g$ symbols. Next theorem shows that our codes
$\mathcal{C}(n,k+g,\frac{n-g}{g},k'=0)$ have better performance than 
$(n,k,g)$ optimal-LRC code, in terms of both storage overhead and repair bandwidth.

\begin{theorem}
	\label{th7}
	If $2g>n-k+1$, then the proposed codes $\mathcal{C}(n,k+g,\frac{n-g}{g},k'=0)$ (suppose that $\frac{n-g}{g}$ is an integer)
	have strictly less storage overhead and less repair bandwidth
	than that of $(n,k,g)$ optimal-LRC code, under the same fault-tolerant capability.
\end{theorem}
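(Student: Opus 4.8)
The plan is to follow the template of the proof of Theorem~\ref{th6}, now comparing against the $(n,k,g)$ optimal-LRC instead of the Azure-LRC. I would set $s=\frac{n-g}{g}$, so that $s+1=\frac{n}{g}$, and work with $\mathcal{C}(n,k+g,s,k'=0)$, whose $s$ data columns are instances of an $(n,k+g)$ MDS code; here $r=n-(k+g)\ge 1$ (the optimal-LRC has at least one global parity) and $n\ge 2g$, since $\frac{n-g}{g}$ is assumed to be a positive integer. Both codes then have $n$ nodes.

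First I would verify that $\mathcal{C}(n,k+g,\frac{n-g}{g},k'=0)$ attains the fault tolerance $n-k-g+1$ of the optimal-LRC. Applying Theorem~\ref{th5} with its ``$k$'' replaced by $k+g$ and its ``$r$'' by $n-k-g$, the condition to check is $k+g>(s-1)(r+1)+1$. Exactly as in the proof of Theorem~\ref{th6}, the hypothesis $2g>n-k+1$ gives $n-k-g+1<g$, hence $(s-1)(r+1)+1=\left(\frac{n-g}{g}-1\right)(n-k-g+1)+1<\left(\frac{n-g}{g}-1\right)g+1=n-2g+1<k<k+g$, so Theorem~\ref{th5} applies and the code recovers any $n-k-g+1$ node failures.

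Next I would compute the two target quantities; both comparisons collapse to a single inequality. The storage overhead of $\mathcal{C}(n,k+g,s,k'=0)$ is $\frac{(s+1)n}{s(k+g)}=\frac{n^{2}}{(n-g)(k+g)}$, against $\frac{n}{k}$ for the optimal-LRC; and by Theorem~\ref{th4} its average repair bandwidth ratio of all nodes is $\frac{s+1}{k+g}=\frac{n}{g(k+g)}$, against $\frac{(n-g)/g}{k}=\frac{n-g}{gk}$ for the optimal-LRC, each of whose single-node repairs costs $(n-g)/g$ symbols. Clearing denominators, both $\frac{n^{2}}{(n-g)(k+g)}<\frac{n}{k}$ and $\frac{n}{g(k+g)}<\frac{n-g}{gk}$ are equivalent to $nk<(n-g)(k+g)$, i.e.\ to $0<g(n-k-g)=gr$, which holds since $g\ge 1$ and $r\ge 1$.

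The argument is thus mostly bookkeeping, and the one step I expect to need genuine care is the fault-tolerance check: one must make sure the choice $s=\frac{n-g}{g}$ together with $2g>n-k+1$ truly satisfies the hypothesis $k+g>(s-1)(r+1)+1$ of Theorem~\ref{th5}, including the boundary case $n=2g$ (where $s=1$ and the inequality is just $k+g\ge 2>1$). Once this is settled, the strict inequalities for storage overhead and repair bandwidth both follow immediately from $nk<(n-g)(k+g)$.
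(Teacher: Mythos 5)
Your proposal is correct and follows essentially the same route as the paper's own proof: choose $s=\frac{n-g}{g}$, verify the hypothesis of Theorem~\ref{th5} (with $k+g$ in place of $k$) from $2g>n-k+1$ to match the fault tolerance $n-k-g+1$, and reduce both the storage-overhead and repair-bandwidth comparisons to the single inequality $nk<(n-g)(k+g)$, i.e.\ $k+g<n$. The only cosmetic difference is that you phrase this last inequality as $g(n-k-g)>0$ and make the boundary case $n=2g$ explicit, which the paper leaves implicit.
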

\begin{proof}
When $s=\frac{n-g}{g}$, the storage overhead of $\mathcal{C}(n,k+g,\frac{n-g}{g},k'=0)$ is
\begin{align*}
\frac{(s+1)\cdot n}{s\cdot(k+g)}=\frac{(\frac{n-g}{g}+1)\cdot n}{(\frac{n-g}{g})\cdot (k+g)}=\frac{n^2}{(n-g)\cdot(k+g)},
\end{align*}
while the storage overhead of $(n,k,g)$ optimal-LRC is $\frac{n}{k}$. We have that
\begin{align*}
\frac{n^2}{(n-g)\cdot(k+g)}<\frac{n}{k}\Leftrightarrow nk<(n-g)(k+g)\Leftrightarrow k+g<n.
\end{align*}
The last inequality is obviously true. Therefore, $\mathcal{C}(n,k+g,\frac{n-g}{g},k'=0)$
have strictly less storage overhead than that of $(n,k,g)$ optimal-LRC.

By assumption, we have that $2g>n-k+1$, then we can obtain that
\begin{align*}
	(s-1)(r+1)+1=(\frac{n-g}{g}-1)(n-k-g+1)+1<(\frac{n-g}{g}-1) g+1=n-2g+1<k+g.
\end{align*}
The condition in Theorem \ref{th5} is satisfied, and the fault-tolerant of 
$\mathcal{C}(n,k+g,\frac{n-g}{g},k'=0)$ is $r+1=n-k-g+1$, which is equal to the 
fault-tolerant of $(n,k,g)$ optimal-LRC code.

Recall that the average repair bandwidth ratio of all nodes of $(n,k,g)$ optimal-LRC 
and $\mathcal{C}(n,k+g,\frac{n-g}{g},k'=0)$ is $\frac{n-g}{k\cdot g}$ and 
$\frac{s+1}{k+g}$, respectively. We have that
\begin{align*}
\frac{s+1}{k+g}<\frac{n-g}{k\cdot g}\Leftrightarrow\frac{(\frac{n-g}{g})+1}{k+g}<\frac{n-g}{k\cdot g}\Leftrightarrow nk<(n-g)(k+g)\Leftrightarrow k+g<n.
\end{align*}
The last inequality is obviously true. Therefore, $\mathcal{C}(n,k+g,\frac{n-g}{g},k'=0)$
have strictly less repair bandwidth than that of $(n,k,g)$ optimal-LRC.
\end{proof}

\begin{figure}
	\centering
	\includegraphics[width=0.50\linewidth]{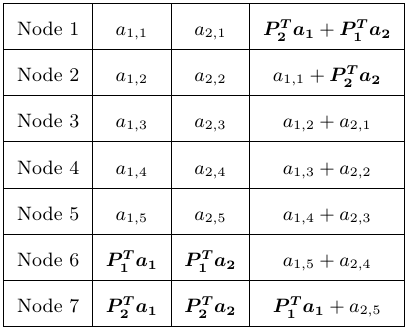}
	\caption{An example of code $\mathcal{C}(n=7,k=5,s=2,k'=0)$.}
	\label{fig.7}
\end{figure}

Consider the code $\mathcal{C}(n=7,k=5,s=2,k'=0)$ which is shown in Fig. \ref{fig.7}.
Suppose that node 1 fails, we can first download 2 symbols
$\mathbf{P}_{2}^T\mathbf{a}_1,\mathbf{P}_{1}^T\mathbf{a}_2$ to obtain the piggyback function $p_1=\mathbf{P}_{2}^T\mathbf{a}_1+\mathbf{P}_{1}^T\mathbf{a}_2$, then download 2 symbols $a_{1,1}+\mathbf{P}_{2}^T\mathbf{a}_2, \mathbf{P}_{2}^T\mathbf{a}_2$ to recover $a_{1,1}$ and finally download 2 symbols $a_{1,2}+a_{2,1},a_{1,2}$ to recover $a_{2,1}$. The repair bandwidth of node 1 is 6 symbols.
Similarly, the repair bandwidth of any single-node failure
is 6 symbols.

We claim that the code $\mathcal{C}(n=7,k=5,s=2,k'=0)$ can tolerant any $r+1=3$ node failures. Suppose that nodes 2, 4 and 6 fail, we have that the number of surviving nodes between two failed nodes are $t_1=1$, $t_2=1$ and $t_3=2$. We can first repair the symbol $\mathbf{P}_{1}^T\mathbf{a}_2$ in node 6 by downloading symbols $\mathbf{P}_{2}^T\mathbf{a}_1+\mathbf{P}_{1}^T\mathbf{a}_2, \mathbf{P}_{2}^T\mathbf{a}_1$. Since the symbol $\mathbf{P}_{1}^T\mathbf{a}_2$ has been recovered, according to the MDS property of the second column, we can recover $a_{2,2},a_{2,4}$. Then, we can download symbols $a_{2,5},\mathbf{P}_{1}^T\mathbf{a}_1+a_{2,5}$ to recover $\mathbf{P}_{1}^T\mathbf{a}_1$. Finally, since $\mathbf{P}_{1}^T\mathbf{a}_1$ has been repaired, we can recover $a_{1,2},a_{1,4}$ by the MDS property of the first column. Up to now, we have recovered the first two symbols in the three failed nodes. Since the third symbol in each of the three failed nodes is a piggyback function, we can recover the symbol by reading some data symbols. The repair method of any three failed nodes is similar as the above repair method.

\section{Comparison}
\label{sec:com}
In this section, we evaluate the repair bandwidth for our piggybacking codes
$\mathcal{C}(n,k,s,k')$ and other related codes,
such as existing piggybacking codes \cite{2017Piggybacking,2021piggyback}
and $(n,k,g)$ Azure-LRC code \cite{huang2012} under the same fault-tolerance and
the storage overhead.

\subsection{Codes $\mathcal{C}(n,k,s,k'=k)$ VS Piggybacking Codes}
\label{sec:com1}
Recall that OOP codes
\cite{2019AnEfficient} have the lowest repair bandwidth for any single-node failure
among the existing piggybacking codes when $r\leq10$ and sub-packetization is $r-1+\lfloor \sqrt{r-1}\rfloor$ or $r-1+\lceil \sqrt{r-1}\rceil$,
the codes in \cite{2021piggybacking} have the lowest repair bandwidth for any
single-node failure among the existing piggybacking codes when $r\geq10$ and sub-packetization is {$r$}.
REPB codes \cite{2018Repair} have small repair bandwidth for any single-data-node
with sub-packetization {usually} less than $r$.
Moreover, the codes in \cite{2021piggyback} have lower repair bandwidth than the existing
piggybacking codes when {$r\geq10$} and sub-packetization is less than $r$.
There are two constructions in \cite{2021piggyback}, the first construction
in \cite{2021piggyback} has larger repair bandwidth than the second construction.
We choose codes in \cite{2018Repair,2021piggybacking} and the second construction
in \cite{2021piggyback} as the main comparison.

Let $\mathcal{C}_{1}$ be the second piggybacking codes in \cite{2021piggyback}
and $\mathcal{C}_{2}$ be the codes in \cite{2021piggybacking}. Fig. \ref{fig.8}
shows the average repair bandwidth ratio of all nodes for codes $\mathcal{C}_{1}$,
$\mathcal{C}_{2}$, REPB codes \cite{2018Repair} and the proposed codes $\mathcal{C}(n,k,s,k'=k)$,
where $r=8,9$ and $k=10,11,\ldots,100$.
Note that the sub-packetization of REPB codes and $\mathcal{C}_{1}$ is inexplicit
and {usually} less than $r$. In Fig. \ref{fig.8}, we choose the lower bound of the repair bandwidth of REPB codes and $\mathcal{C}_{1}$.
The sub-packetization of $\mathcal{C}_{2}$ is $r$.

The results in Fig. \ref{fig.8} demonstrate that the proposed codes $\mathcal{C}(n,k,s,k'=k)$
have the lowest average repair bandwidth ratio of all nodes compared to the existing
piggybacking codes, when the sub-packetization level is less than $r$ and $k\geq 30$. Note that
the sub-packetization of our codes $\mathcal{C}(n,k,s,k'=k)$ is $\sqrt{r}$, which is
lower than that of $\mathcal{C}_{2}$.

\begin{figure}[htpb]
	\centering
	\subfigure[$r=8$]{               
		\includegraphics[width=7cm]{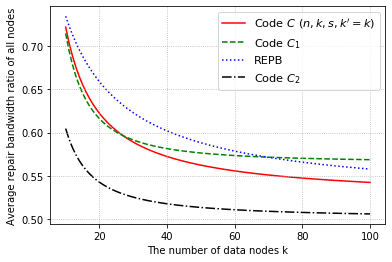}}
	\subfigure[$r=9$]{
		\includegraphics[width=7cm]{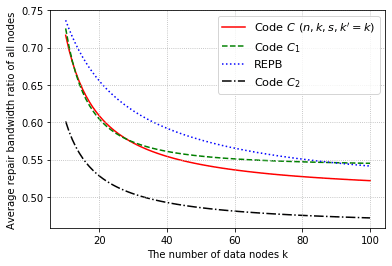}}
	\caption{Average repair bandwidth ratio of all nodes for codes $\mathcal{C}(n,k,s,k'=k)$, REPB, $\mathcal{C}_{1}$ and $\mathcal{C}_{2}$, where $r=8$ and $k=10,11,\ldots,100$ in $(a)$, $r=9$ and $k=10,11,\ldots,100$ in $(b)$.}
	\label{fig.8}
\end{figure}

\subsection{Codes $\mathcal{C}(n,k,s,k'=k-sr-1)$ VS Piggybacking Codes}
Recall that OOP codes \cite{2019AnEfficient} have the lowest
repair bandwidth for any single-node failure among the existing
piggybacking codes when $r\leq10$, where the sub-packetization of OOP codes is
$r-1+\lfloor \sqrt{r-1}\rfloor$ or $r-1+\lceil \sqrt{r-1}\rceil$, and the minimum value
of average repair bandwidth ratio of all nodes $\gamma_{OOP}^{all}$ is
\begin{eqnarray}
&&\gamma_{OOP}^{all}=\frac{1}{k+r}(k\cdot \frac{2\sqrt{r-1}+1}{2\sqrt{r-1}+r}+r\cdot(\frac{\sqrt{r-1}}{r}+\frac{1}{r}+\frac{(r-1)^2-\sqrt{(r-1)^3}}{kr})).\label{eq0}
\end{eqnarray}

In the following, we show that our codes $\mathcal{C}(n,k,s,k'=k-sr-1)$ have strictly
less repair bandwidth than OOP codes and thus have less repair bandwidth than all the existing
piggybacking codes when $r\leq10$.

\begin{lemma}
	\label{col1}
Let $\gamma_{1}^{all}$ be the average repair bandwidth ratio of all nodes of
$\mathcal{C}(n,k,s,k'=k-sr-1)$. When $2\leq r\ll k,2+\sqrt{r-1}\leq s$, we have
	\begin{eqnarray}
		&&\underset{k\rightarrow +\infty}{lim}\gamma_{1}^{all}< \underset{k\rightarrow +\infty}{lim}\gamma_{OOP}^{all}.\nonumber
	\end{eqnarray}
\end{lemma}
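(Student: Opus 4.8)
The plan is to compute both limiting quantities explicitly and then reduce the desired inequality to an elementary algebraic fact in $r$ and $s$. First I would apply Theorem~\ref{th2} (more precisely Corollary~\ref{col3} and Eq.~\eqref{eq11}) to the code $\mathcal{C}(n,k,s,k'=k-sr-1)$. Here $h=sr+1$, so $h+r-1=sr+r=r(s+1)$ and $k-h+1 = k-sr$. Substituting into the upper bound of Theorem~\ref{th1} (or re-deriving the exact $\gamma^{all}$ expression from Eq.~\eqref{eq:rep-sum} and Eq.~\eqref{eq:rep-sum1} as in the proof of Theorem~\ref{th2}), and then taking $k\to+\infty$ with $r,s$ fixed, the terms that scale like $1/k$ or $1/k^2$ vanish. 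The key observation is that $u = \lceil s(k-sr)/(r(s+1))\rceil$ behaves like $s/(r(s+1))\cdot k$ for large $k$, so $(u+s)^2(h+r-1)/((k+r)(sk+k-h))$ tends to a finite constant. I expect to obtain something of the form
\begin{eqnarray}
\underset{k\rightarrow+\infty}{\lim}\gamma_1^{all} = \frac{s}{(s+1)^2 r} + \frac{1}{s+1}\cdot\frac{s}{s+1} = \frac{s}{(s+1)^2}\left(\frac{1}{r}+1\right),\nonumber
\end{eqnarray}
after carefully tracking which parities carry piggybacks; I would double-check this against the exact repair-bandwidth accounting rather than trusting the upper bound, since the statement is a strict inequality and a loose bound might not suffice.

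Next I would compute $\lim_{k\to+\infty}\gamma_{OOP}^{all}$ from Eq.~\eqref{eq0}. As $k\to+\infty$ with $r$ fixed, the $r$-weighted term contributes $\frac{r}{k+r}\left(\frac{\sqrt{r-1}}{r}+\frac{1}{r}+O(1/k)\right)\to 0$ except we must keep the leading part: actually $\frac{r}{k+r}\to 0$, so the whole parity contribution vanishes, and $\frac{k}{k+r}\to 1$, leaving
\begin{eqnarray}
\underset{k\rightarrow+\infty}{\lim}\gamma_{OOP}^{all} = \frac{2\sqrt{r-1}+1}{2\sqrt{r-1}+r}.\nonumber
\end{eqnarray}
So the lemma reduces to showing $\frac{s}{(s+1)^2}\cdot\frac{r+1}{r} < \frac{2\sqrt{r-1}+1}{2\sqrt{r-1}+r}$ for all $r\ge 2$ and $s\ge 2+\sqrt{r-1}$. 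Since the left side is decreasing in $s$ for $s\ge 1$ (as $s/(s+1)^2$ is), it suffices to verify the inequality at the smallest allowed value $s = 2+\lceil\sqrt{r-1}\rceil$ — or, to make the argument clean and $r$-uniform, at $s = 2+\sqrt{r-1}$ treating $s$ as real, since decreasing the feasible integer $s$ to this real lower bound only increases the left-hand side.

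Then I would carry out the one-variable estimate: set $w=\sqrt{r-1}$ so $r=w^2+1$ and $s=w+2$, giving left side $\frac{(w+2)(w^2+2)}{(w+3)^2(w^2+1)}$ and right side $\frac{2w+1}{w^2+2w+1}=\frac{2w+1}{(w+1)^2}$. Cross-multiplying reduces everything to a polynomial inequality in $w\ge 1$ (namely $w=1$ corresponds to $r=2$), which I would verify by expanding and checking the sign of the resulting polynomial for $w\ge 1$ — bounding the positive leading terms against the negative lower-order terms, or simply noting the polynomial is increasing and positive at $w=1$. The main obstacle I anticipate is not this final polynomial check but rather the first step: getting the \emph{exact} (not merely upper-bounded) limiting value of $\gamma_1^{all}$, because Theorem~\ref{th1} only gives an inequality, and to prove a strict comparison I may need the sharper two-sided estimate analogous to Theorem~\ref{th2}, verifying that the floor/ceiling discrepancy term is $O(1/k)$ and hence irrelevant in the limit. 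Once the two limits are pinned down exactly, the rest is routine algebra, and the hypothesis $s\ge 2+\sqrt{r-1}$ is exactly what is needed to push the left-hand side below the OOP value.
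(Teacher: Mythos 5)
Your overall plan (pin down the two limits, then reduce to an algebraic inequality in $r,s$, using monotonicity in $s$ and checking at $s=2+\sqrt{r-1}$) is workable, but the central computation as you state it is wrong, and the inequality you end up verifying is not the one the lemma needs. With $h=sr+1$ we have $h+r-1=r(s+1)$ and each $n_i\sim \frac{sk}{r(s+1)}$, so from the exact expression in the proof of Theorem~\ref{th1} (or from its upper bound, which is asymptotically tight because the $n_i$ differ by at most one),
\begin{equation*}
\underset{k\rightarrow+\infty}{\lim}\gamma_{1}^{all}=\frac{1}{s+1}+\frac{s^{2}}{r(s+1)^{2}},
\end{equation*}
whereas you claim $\frac{s}{(s+1)^{2}}\bigl(1+\frac{1}{r}\bigr)=\frac{s}{r(s+1)^2}+\frac{s}{(s+1)^2}$, which is strictly smaller (you have $s$ where $s^2$ and $s+1$ should appear). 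Consequently your final reduction, $\frac{s(r+1)}{r(s+1)^2}<\frac{2\sqrt{r-1}+1}{2\sqrt{r-1}+r}$, is weaker than what is required, and proving it does not establish the lemma. Your related worry is also misplaced: since the lemma asks for an upper comparison of $\gamma_1^{all}$, the one-sided bound of Theorem~\ref{th1} is exactly the useful direction; there is no need for a two-sided estimate or an exact limit.

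The good news is that the corrected version of your route survives: $\frac{1}{s+1}+\frac{s^{2}}{r(s+1)^{2}}$ is decreasing in $s$ for $r\ge 2$ (its derivative has sign $(2-r)s-r<0$), your OOP limit $\frac{2\sqrt{r-1}+1}{2\sqrt{r-1}+r}$ is correct, and the resulting inequality at $s=2+\sqrt{r-1}$ does hold, so the polynomial check in $w=\sqrt{r-1}$ can be pushed through. The paper avoids all of this bookkeeping: it relaxes the limit bound to $\frac{s^{2}}{r(s+1)^{2}}+\frac{1}{s+1}<\frac{1}{r}+\frac{1}{s+1}$ (this strict step is where strictness comes from), and then shows $\frac{1}{s+1}\le\frac{\sqrt{r-1}}{2\sqrt{r-1}+r}\le\underset{k\rightarrow+\infty}{\lim}\gamma_{OOP}^{all}-\frac{1}{r}$, using $s\ge 2+\sqrt{r-1}$ and $\frac{r-1}{r}\ge\frac12$, with no polynomial expansion needed. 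If you fix the limit value and either redo your cross-multiplied check or switch to the paper's relaxation, your argument becomes correct.
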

\begin{proof}
According to Theorem. \ref{th1}, since $k-k'=h=sr+1$, we have
			\begin{eqnarray}
			\underset{k\rightarrow +\infty}{lim}\gamma_{1}^{all}&\leq &\frac{s^2}{(sr+r)(s+1)}+\frac{1}{s+1}\nonumber\\&=&\frac{s^2}{(s+1)^2}\cdot\frac{1}{r}+\frac{1}{s+1}<\frac{1}{r}+\frac{1}{s+1}.\nonumber
		\end{eqnarray}
		From Eq. \eqref{eq0}, when $r\geq2$, we have
		\begin{eqnarray}
			(\underset{k\rightarrow+\infty}{lim}\gamma_{OOP}^{all})-\frac{1}{r}
			&=&\frac{2\sqrt{r-1}+1}{2\sqrt{r-1}+r}-\frac{1}{r}\nonumber\\
			&=&\frac{2\sqrt{r-1}}{2\sqrt{r-1}+r}\cdot\frac{r-1}{r}\nonumber\\
			&\geq&\frac{2\sqrt{r-1}}{2\sqrt{r-1}+r}\cdot\frac{1}{2}\nonumber\\
			&=&\frac{\sqrt{r-1}}{2\sqrt{r-1}+r}.\label{eq5}
		\end{eqnarray}
		When $s\geq\sqrt{r-1}+2$ and $r\geq2$, we have
		\begin{eqnarray}
			s&\geq&\sqrt{r-1}+2\nonumber\\
			&=&(\sqrt{r-1}+1)+1\nonumber\\
			&=&(\frac{r-1}{\sqrt{r-1}}+1)+1\nonumber\\
			&\geq&(\frac{r-1}{\sqrt{r-1}}+1)+\frac{1}{\sqrt{r-1}}\nonumber\\
			&=&\frac{r+\sqrt{r-1}}{\sqrt{r-1}}.\label{eq6}
		\end{eqnarray}
	 From Eq. \eqref{eq6} and Eq. \eqref{eq5}, we have
	 \begin{eqnarray}
	 	\frac{1}{s+1}&\leq&(\frac{r+\sqrt{r-1}}{\sqrt{r-1}}+1)^{-1}\nonumber\\
	 	&=&\frac{\sqrt{r-1}}{2\sqrt{r-1}+r}\nonumber\\
	 	&\leq&(\underset{k\rightarrow+\infty}{lim}\gamma_{OOP}^{all})-\frac{1}{r}.\nonumber
	 \end{eqnarray}
 	Therefore, we have
 	\begin{eqnarray}
 		&&\underset{k\rightarrow +\infty}{lim}\gamma_{1}^{all}<\frac{1}{r}+\frac{1}{s+1}\leq \underset{k\rightarrow +\infty}{lim}\gamma_{OOP}^{all}.\nonumber
 	\end{eqnarray}
	\end{proof}
It is easy to see that the storage overhead of $\mathcal{C}(n,k,s,k'=k-sr-1)$
ranges from $\frac{k+r}{k}$ to $\frac{k}{k-r}\cdot\frac{k+r}{k}$.
According to Lemma \ref{col1}, our codes $\mathcal{C}(n,k,s,k'=k-sr+1)$ have strictly
less repair bandwidth than all the existing piggybacking codes when $r\ll k$, at a cost
of slightly more storage overhead.

\subsection{The proposed Codes VS LRC}
\label{sec:com2}
Next, we evaluate the repair bandwidth of our codes $\mathcal{C}(n,k+g,\frac{n}{g},k'=0)$, codes $\mathcal{C}(n,k+g,\frac{n-g}{g},k'=0)$,
Azure-LRC \cite{huang2012} and optimal-LRC. According to Theorem \ref{th6}, the repair bandwidth
of our $\mathcal{C}(n,k+g,\frac{n}{g},k'=0)$ is strictly less than that of $(n,k,g)$
Azure-LRC \cite{huang2012}. Fig. \ref{fig.9} shows the average repair bandwidth ratio of all nodes
for $\mathcal{C}(n,k+g,\frac{n}{g},k'=0)$ and $(n,k,g)$
Azure-LRC when the fault-tolerance is 8 and $n=100$. The results demonstrate
that $\mathcal{C}(n,k+g,\frac{n}{g},k'=0)$ have strictly less repair bandwidth than
$(n,k,g)$ Azure-LRC. Moreover, our
$\mathcal{C}(n,k+g,\frac{n}{g},k'=0)$ have less storage overhead than
$(n,k,g)$ Azure-LRC.
For example, $\mathcal{C}(n,k+g,\frac{n}{g},k'=0)$
have 44.92\% less repair bandwidth and 6.16\% less storage overhead than $(n,k,g)$ Azure-LRC
when $(n,k,g)=(100,73,20)$.

\begin{figure}[htpb]
	\centering
	\includegraphics[width=0.50\linewidth]{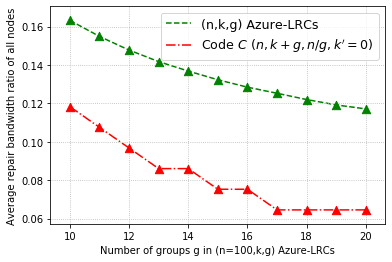}
	\caption{Average repair bandwidth ratio of all nodes for codes $\mathcal{C}(n,k+g,\frac{n}{g},k'=0)$ and $(n,k,g)$ Azure-LRC, where $10\leq g\leq20$, fault-tolerance is 8, and $n=100$.}
	\label{fig.9}
\end{figure}

\begin{figure}
	\centering
	\includegraphics[width=0.50\linewidth]{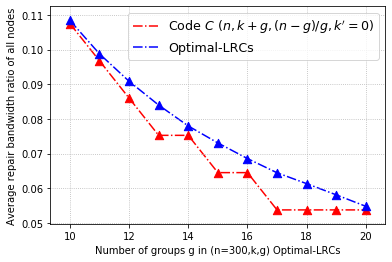}
	\caption{Average repair bandwidth ratio of all nodes for codes $\mathcal{C}(n,k+g,\frac{n-g}{g},k'=0)$ and $(n,k,g)$ optimal-LRC under the same fault-tolerance and code length $n=100$, where $10\leq g\leq20$.}
	\label{fig.10}
\end{figure}
According to Theorem \ref{th7}, the repair bandwidth
of our $\mathcal{C}(n,k+g,\frac{n-g}{g},k'=0)$ is strictly less than that of $(n,k,g)$
optimal-LRC.
Fig. \ref{fig.10} shows the average repair bandwidth ratio of all nodes
for $\mathcal{C}(n,k+g,\frac{n-g}{g},k'=0)$ and $(n,k,g)$
optimal-LRC  under the same fault-tolerance and $n=100$. The results demonstrate
that $\mathcal{C}(n,k+g,\frac{n-g}{g},k'=0)$ have strictly less repair bandwidth than
$(n,k,g)$ optimal-LRC.

\section{Conclusion}
\label{sec:con}
In this paper, we propose two new piggybacking coding designs. We propose
one class of piggybacking codes based on the first design
that are MDS codes, have lower repair bandwidth than the
existing piggybacking codes when $r\geq8$ and the sub-packetization is $\alpha<r$.
We also propose another piggybacking codes based on the second design that are non-MDS codes
and have better tradeoff between
storage overhead and repair bandwidth, compared with
Azure-LRC and optimal-LRC for some parameters.
One future work is to generalize the piggybacking coding design over codewords
of more than two different MDS codes.
Another future work is to obtain the condition of $\mathcal{C}(n,k,s,k'=0)$
that can recover any $r+2$ failures.

\ifCLASSOPTIONcaptionsoff
  \newpage
\fi

\bibliographystyle{IEEEtran}
\bibliography{CNC-v1}
\end{document}